\documentclass[11pt]{article}
\pdfoutput=1

\usepackage[T1]{fontenc}
\usepackage[utf8]{inputenc}
\usepackage{lmodern}
\usepackage{microtype}
\usepackage[margin=1in]{geometry}
\usepackage{amsmath,amssymb,amsthm,mathtools}
\usepackage{booktabs,tabularx,array,multirow}
\usepackage{float}
\usepackage[section]{placeins}
\usepackage{enumitem}
\usepackage{xcolor}
\usepackage{graphicx}
\usepackage{tikz}
\usetikzlibrary{arrows.meta,positioning,fit}
\usepackage{hyperref}
\usepackage[nameinlink,noabbrev]{cleveref}
\usepackage[style=authoryear,maxcitenames=3,maxbibnames=99,giveninits=true,uniquename=init,doi=false,isbn=false,url=true,eprint=true]{biblatex}
\AtBeginBibliography{\small\sloppy}
\hypersetup{
  colorlinks=true,
  linkcolor=blue!45!black,
  citecolor=blue!45!black,
  urlcolor=blue!55!black,
  pdftitle={Beyond Self-Resolution: Settlement Factorization for Robust Natural Language Mechanisms},
  pdfauthor={Nicolas Della Penna}
}

\newtheorem{definition}{Definition}
\newtheorem{proposition}{Proposition}
\newtheorem{theorem}{Theorem}
\newtheorem{corollary}{Corollary}
\newtheorem{lemma}{Lemma}
\newtheorem{conjecture}{Conjecture}
\newtheorem{remark}{Remark}
\newtheorem{assumption}{Assumption}

\newcommand{\E}{\mathbb{E}}
\newcommand{\Prb}{\mathbb{P}}
\newcommand{\TV}{\operatorname{TV}}
\newcommand{\Law}{\mathcal{L}}
\newcommand{\calR}{\mathcal{R}}
\newcommand{\calE}{\mathcal{E}}
\newcommand{\calY}{\mathcal{Y}}
\newcommand{\calZ}{\mathcal{Z}}

\newcommand{\ind}{\mathrel{\perp\!\!\!\perp}}

\title{\textbf{Beyond Self-Resolution: Settlement Factorization for Robust Natural Language Mechanisms}}
\author{Nicolas Della Penna\\\href{mailto:nikete@grouplang.ai}{nikete@grouplang.ai}}
\date{July 2026}

\begin{document}
\maketitle

\begin{abstract}
Language models increasingly mediate \emph{paid advice}: agents submit open-ended forecasts, recommendations, plans, and evidence; a principal acts on the reports; and the mechanism later pays the contributors. Advice should influence the public decision, but no adviser should write the answer key used to evaluate it. We formalize the separation as \emph{settlement factorization}: reports are hardened into official records, a public decision record $Z$ may use all advice, and each paid adviser is scored against a label whose production is externalized from their own report, conditional on $Z$. The central result is an analogue of the revelation principle for this setting: resampling the paid report from a committed \emph{ghost} distribution inside the settlement channel equips every mechanism with an influence-free reference within a factor of two of the best achievable, so factorization is a normal form, and own-report leakage $\varepsilon$---measured in total variation---is an intrinsic invariant of any mechanism, identified behaviorally by worst-case incentive erosion. The invariant has an exact price: with payment kernels of label sensitivity $L$, truthful margins degrade by at most $2L\varepsilon$, and the constant is tight---half own-label manipulation, half pandering to a biased evaluator---so faithful advice survives outside decision interests $D$ whenever the externalized margin satisfies $\gamma>D+2L\varepsilon$. Settling a growing crowd on the shared decision record dilutes every margin exponentially in the informational redundancy, while a factorized leave-one-out label sustains a constant margin at unit stakes. Stakes outbid decision interests but never evaluator capture; randomized reference settlement makes integrity a threshold good priced per unit of total variation; and differential privacy of the evaluator in the paid report certifies $\varepsilon$ by construction.
\end{abstract}

\section{Introduction}

A model recommends which vendor to hire. A research agent proposes an experiment. A reviewer writes an assessment. A provider on an agent marketplace describes its skills, suggests a plan, and predicts whether the plan will work. In each case, a principal wants to reward useful language without delegating the entire institutional decision to the report itself.

It is helpful to call these systems \emph{advice mechanisms}. Their messages are richer than scalar forecasts. An adviser may provide factual claims, conditional predictions, a recommended action, a decomposition of the task, caveats, and supporting evidence in one report. The principal retains the right to combine the reports, apply a constitution, and choose an action. This is neither ordinary preference revelation nor mere text evaluation: the report is intended to change what the principal does.

That intended influence creates a design boundary:

\begin{quote}
\emph{The report may be evidence for the decision, but it must not write its own answer key.}
\end{quote}

The first clause is why a leave-one-out decision rule is often inappropriate. Good advice should affect the public decision. The second clause is why a single unconstrained ``judge the report and decide its reward'' call is fragile. If adviser $i$ can steer the label against which adviser $i$ is scored, a high payment may reflect evaluator control rather than information, effort, or decision value.

\textcite{dellapenna2024nlm} isolates a positive regime in which a foundation model can interpret rich reports and directly implement allocations and transfers under strong accuracy and information-overdetermination conditions.\footnote{That paper measures pipeline quality by a multiplicative welfare guarantee (a $\delta$-sufficient world model). The total-variation leakage used here composes with that notion: by \cref{lem:tv-expectation}, a total-variation bound of $\varepsilon$ on any label or decision kernel translates into an additive payoff perturbation of at most $L\varepsilon$ for every $L$-bounded objective, so the two quality measures can be used in the same accounting.} The present paper does not dispute that result. It asks what an institution should expose once it needs a robust, inspectable, and modular settlement process.

The answer is \emph{settlement factorization}. Reports are hardened into official advice records. A public decision record may use all admissible advice. For each paid adviser, however, the mechanism also constructs an evaluation label whose production is externalized from that adviser's own report, conditional on the public decision and the institution's authorized data. A written resolution is then generated from the settled decision and explicit provenance rather than serving as an independent source of payoff-relevant truth.

The paper's central result says this is not merely a recommended architecture. In the spirit of the revelation principle, every advice mechanism---whatever its internal structure---already admits a factorized representation: resampling the paid report from a committed \emph{ghost} distribution inside its own settlement channel manufactures an influence-free reference, provably within a factor of two of the best possible, so that designing within the factorized class is without loss of generality and the residual own-report leakage $\varepsilon_i$ is a well-defined invariant of any mechanism. The rest of the paper prices that invariant exactly, shows worst-case incentive erosion identifies it, and shows how to purchase, certify, and measure it.

This architecture is deliberately agnostic about how the ideal evaluation label is obtained. It may be a directly verified outcome, a holdout audit, a proxy, a peer or reference report, a self-resolving market label, an evidence-weighted judgment, a vector of potential outcomes, or a hybrid of these. The paper is therefore about going \emph{beyond verification}: direct ground truth is one clean special case, not a prerequisite for settlement design.

The distinction also clarifies the relationship to a rapidly developing literature. Textual scoring mechanisms ask how to score strategic text against ground truth or peer text \parencite{wu2024elicitationgpt,lu2024informative,lu2025aligned}. Rationale and evidence mechanisms ask how to elicit explanations or evidence and how those objects improve aggregation \parencite{srinivasan2025tell,hossain2026evidence}. Rich-query aggregation asks what additional structured questions recover information that scalar forecasts omit \parencite{frongillo2025queries}. Self-resolving markets ask how later or better-informed reports can proxy for unverifiable outcomes \parencite{srinivasan2023selfresolving}. Decision markets, performative prediction, compensation rules, proxy forecasting, and joint scoring rules ask how to preserve forecast incentives when reports affect actions or outcomes \parencite{boutilier2011experts,wang2021decision,wang2023proxy,oesterheld2023performative,hudson2024joint}. These mechanisms can create the ideal truthful-advice margin used in our analysis. Settlement factorization studies the separate question of whether the operational pipeline preserves that margin or lets the paid report contaminate its own evaluator.

\paragraph{Contributions.}
The paper makes seven contributions.

\begin{enumerate}[leftmargin=1.5em]
    \item It reframes natural language mechanisms as mechanisms for \emph{paid advice}. The principal's decision and the adviser's evaluation are different institutional objects. Advice may influence the first; the second must be externally anchored.
    \item It defines settlement factorization with four explicit objects: hardened advice, a public decision record, agent-specific externalized labels, and provenance-linked writing. Conditional factorization permits the public decision to depend on all reports while excluding direct own-report influence from the label used to pay the same adviser.
    \item \textbf{The central result: factorization is a normal form.} A ghost-reference construction makes the factorized class without loss of generality, and a converse shows worst-case margin erosion over a canonical probe class \emph{identifies} the leakage invariant exactly---an operational meaning independent of pipeline internals. In the canonical redundant-information family of self-resolution, \emph{any} payment settled on the shared record has margins vanishing exponentially in redundancy, while a factorized leave-one-out label keeps a non-vanishing margin at unit stakes---the paper's title as a theorem.
    \item It gives a modular incentive result. Any external scoring or decision module that supplies a truthful-advice margin continues to work under exact factorization. Proper scoring yields a score-divergence margin; under log scoring this equals conditional mutual information.
    \item It proves an approximate robustness theorem and shows it is exactly tight. If own-label leakage is $\varepsilon_i$ and label sensitivity is $L_i$, truthful margins lose at most $2L_i\varepsilon_i$; a matching construction realizes the full budget, with own-report manipulation and pandering to a biased evaluator each accounting for $L_i\varepsilon_i$. With bounded direct interest $D_i$ in the principal's decision, a sufficient condition is $\gamma_i>D_i+2L_i\varepsilon_i$. A composition lemma bounds the leakage of hybrid labels built from conditionally independent components; a counterexample shows the independence requirement cannot be dropped.
    \item It derives three design levers from the same primitives. The faithfulness condition is scale-invariant, so stakes can outbid any bounded decision interest but never evaluator capture, making the normalized margin $\underline m_i/L_i>2\varepsilon_i$ a hard feasibility frontier. Randomized reference settlement reduces effective leakage linearly in the audit rate. And because black-box audits cannot certify small leakage over rich label spaces, certification must be architectural: differential privacy of the evaluator in the paid report bounds $\varepsilon_i$ by construction, with composition guarantees that survive adaptively chosen label components. Together these give a principal's problem in which integrity is a threshold good with an explicit price per unit of total variation, and the audit rate $p_i^\ast$ is its solution.
    \item It turns the theorem primitives into an empirical agenda: estimate advice value $\gamma_i$, own-label leakage $\varepsilon_i$, decision interest $D_i$, constitution sensitivity, and memo fidelity rather than relying only on opaque end-to-end accuracy.
\end{enumerate}

The claim is intentionally narrower than ``factorization solves truthful advice.'' It does not. Decision coupling, peer correlation, collusion, identity, fairness, and evaluator quality remain substantive mechanism-design problems. Factorization is the architecture that keeps solutions to those problems from being undone by evaluator capture.

\paragraph{Roadmap.}
\Cref{sec:literature,sec:example,sec:model} fix the objects: paid advice, a marketplace example, and the four settlement objects with their authorized information flows. \Cref{sec:exact} shows exact factorization preserves any upstream truthful margin. \Cref{sec:leakage} is the core: the $2L_i\varepsilon_i$ price and its exact tightness, the behavioral converse, the ghost-reference normal form, the dilution family that quantifies the title, and the levers that turn the price schedule into design. \Cref{sec:labels,sec:certified} construct externalized labels without literal ground truth and certify their leakage architecturally; \cref{sec:language,sec:limits,sec:measurement} say what is language-specific, what factorization does not solve, and how to estimate every primitive; and \cref{sec:conjecture} states the mechanism-level normal form that remains conjectural.

\section{Paid advice and the neighboring literature}
\label{sec:literature}

\subsection{Advice rather than delegated control}

An advice mechanism has three conceptually distinct stages. First, advisers communicate information and recommendations. Second, a principal chooses what to do. Third, the institution evaluates and pays the advice. Collapsing these stages is attractive because a language model can execute all three in one prompt. It is also precisely what makes causal responsibility hard to audit.

The advice framing differs from delegated control. The principal may reject an adviser's recommendation, combine several reports, enforce constraints, or choose a stochastic action for incentive reasons. It also differs from ordinary forecasting. Reports may contain conditional predictions, causal claims, action proposals, and evidence whose value depends on the principal's objective. A faithful report is therefore not merely a calibrated scalar probability. It is a report whose hardened content faithfully represents the adviser's information and whose recommendation is optimal relative to the stated constitution.

Two influence channels must be separated:
\begin{align*}
\text{advice} &\longrightarrow \text{public decision}, &&\text{intended};\\
\text{advice}_i &\longrightarrow \text{label used to pay }i, &&\text{dangerous unless explicitly incentive-compatible}.
\end{align*}

The first channel creates the value of advice. The second can make a report self-fulfilling, performative, or simply self-grading. Settlement factorization does not suppress the first channel. It exposes and bounds the second.

\subsection{What the adjacent mechanisms contribute}

\Cref{tab:literature-map} summarizes the closest lines of work. The rows are not mutually exclusive. A deployed mechanism may combine several of them.

\begin{table}[H]
\centering
\small
\begin{tabularx}{\textwidth}{>{\raggedright\arraybackslash}p{0.19\textwidth}>{\raggedright\arraybackslash}p{0.25\textwidth}>{\raggedright\arraybackslash}X>{\raggedright\arraybackslash}p{0.23\textwidth}}
\toprule
\textbf{Line of work} & \textbf{What is elicited} & \textbf{Where the incentive comes from} & \textbf{Relation to this paper}\\
\midrule
NLM self-resolution \parencite{dellapenna2024nlm} & Unrestricted natural-language reports & A capable world model interprets reports and jointly selects outcomes and transfers & Positive end-to-end baseline; this paper externalizes settlement objects.\\
\addlinespace
Text scoring and textual peer prediction \parencite{wu2024elicitationgpt,lu2024informative,lu2025aligned} & Text evaluated against ground-truth or peer text & Proper numerical reductions, predictive models, or aligned score construction & Candidate payment kernels and truthful margins; does not by itself guarantee label provenance.\\
\addlinespace
Rationales and evidence \parencite{srinivasan2025tell,hossain2026evidence} & Beliefs together with explanations or supporting evidence & Deliberation, evidence-sensitive liquidity, staking, or endogenous resolution & Candidate hardening and label-production modules; factorization asks whether evidence can steer its submitter's own label.\\
\addlinespace
Rich-query aggregation \parencite{frongillo2025queries} & Structured answers beyond first-order forecasts & Truthful query mechanisms with accuracy-complexity guarantees & Alternative to unrestricted prose; can implement externalized label construction.\\
\addlinespace
Self-resolving and peer mechanisms \parencite{miller2005peer,srinivasan2023selfresolving} & Reports about unverifiable states & Better-informed later reports or correlated peers proxy for truth & Important beyond-verification label sources; vulnerable if reference selection is report-controlled.\\
\addlinespace
Decision and performative scoring \parencite{boutilier2011experts,wang2023proxy,oesterheld2023performative,hudson2024joint} & Conditional predictions used to choose actions & Compensation, proxies, stochastic choice, fixed points, or zero-sum joint scores & Supplies an externalized margin when advice changes actions or outcomes. Factorization protects the resulting score channel.\\
\addlinespace
Token-level generation auctions \parencite{duetting2024mechanism,liu2026ads} & Scalar bids over next-token distributions during joint generation & Monotone distribution aggregation with second-price payments & Governs how joint content is \emph{produced}; orthogonal to how a finished report is scored and settled.\\
\addlinespace
Settlement factorization (this paper) & Hardened advice, public decision, externalized label, and resolution & Any valid upstream scoring or decision mechanism, minus a quantified leakage tax & Architectural normal form separating useful decision influence from own-evaluator influence.\\
\bottomrule
\end{tabularx}
\caption{The paper is complementary to mechanisms that construct truthful scores. Those mechanisms aim to create an ideal advice margin; settlement factorization asks whether the operational institution preserves it.}
\label{tab:literature-map}
\end{table}

This division of labor matters for novelty. Proper textual scoring is a scoring primitive, not the contribution claimed here. Evidence markets provide a structured market with evidence-sensitive liquidity, bounded loss, endogenous resolution, and approximate incentive guarantees \parencite{hossain2026evidence}; they are substantially more operational than the abstract factorization studied here. Joint scoring rules address a distinct decision-coupling failure: an accuracy-seeking adviser can manipulate a principal toward predictable actions, and zero-sum competition can remove that influence incentive under the paper's conditions \parencite{hudson2024joint}. Our theorem begins after such a mechanism has generated an externalized truthful margin and asks how much implementation leakage that margin can tolerate.

The oldest neighbor is agency theory. The informativeness principle says compensation should condition on signals informative about the agent's hidden action \parencite{holmstrom1979moral}; the paid report is that principle's adversarial inversion---a signal maximally \emph{controllable} by the agent---and $\varepsilon_i$ measures its residual presence in the performance measure. The literature on distorted and gameable performance measures \parencite{baker1992incentive} and on multitask crowding-out \parencite{holmstrom1991multitask} predicts that agents optimize the measure rather than the objective; \cref{thm:leakage,prop:tightness} sharpen that prediction for this setting into an exactly tight agency tax of $2L_i\varepsilon_i$, and factorization is the informativeness principle applied to strategic contamination rather than to noise.

\subsection{Position in the current LLM mechanism-design stack}

A contemporaneous workshop program on game theory and mechanism design with large language models makes the stack visible \parencite{llmincentives2026}. Textual skill allocation addresses who should receive a task; communication-cost models qualify when richer prompts are worth their burden \parencite{dong2026rightsizing}; evidence and scoring papers address what reports should contain and how to score them; in-context credit assignment addresses how to divide value among contributors; and work on clone robustness, mecha-nudges, fairness, and collusion addresses attacks and constraints around the market \parencite{harris2026core,hays2026candidacy,frey2026mecha,hosseini2025fairness,carissimo2025orchestration,luo2026collusion,galanis2026aggregation,kang2026reasoning}.

Settlement factorization occupies the interface between these layers. For example, a textual-skill allocation rule may select an agent, an evidence market may collect support for the output, an aligned textual score may value the report, and a least-core rule may divide surplus. None of those choices alone determines whether the selected provider can contaminate the label used to pay itself. Conversely, factorization does not solve clone entry, coalition payments, distributive fairness, or collusion.

Several accepted workshop manuscripts were not publicly available at the time of writing, including work on manipulation-resistant AI oracles, textual-skill task allocation, probabilistic coherence, endogenous reliance, wagering-based aggregation, and token-level advertising \parencite{monroe2026oracles,vonallmen2026skills,chadwick2026coherence,jiang2026reliance,luo2026wagering,liu2026ads}. We therefore use their public titles only to locate the present paper in the emerging research stack and make no technical priority claim about their unpublished results.

A related but orthogonal line designs mechanisms \emph{inside} the generation process itself: \textcite{duetting2024mechanism} auction next-token distributions among bidding LLM agents, showing that two natural incentive properties are equivalent to a monotonicity condition on distribution aggregation, which in turn enables a second-price payment rule; token-level advertising \parencite{liu2026ads} continues this direction. Those mechanisms govern how joint content is produced. Settlement factorization governs how a finished report is scored and settled, and the two layers compose.

\subsection{The systems-security analogue: information-flow control}
\label{sec:ifc}

Settlement factorization is the incentive-theoretic counterpart of privilege separation in LLM-agent security. Prompt-injection defenses such as the dual-LLM pattern and CaMeL extract control and data flows from the trusted query so that untrusted content, processed by a quarantined model, can never alter control flow \parencite{debenedetti2025camel}; design-pattern catalogues generalize the approach to families of agent architectures \parencite{beurer2025design}. The hardening map $h_i^C$ plays the role of the quarantine boundary, and the taint being tracked is not attacker instructions but an adviser's strategic influence on its own settlement. Two differences matter. First, the advice setting cannot simply block the tainted channel: the report is \emph{supposed} to influence the public decision, so factorization must be conditional rather than absolute. Second, rather than certifying a binary property, \cref{thm:leakage} prices residual taint as a quantitative budget $2L_i\varepsilon_i$. The empirical premise that model-mediated evaluators are influenceable is well documented: LLM judges exhibit self-preference causally linked to recognizing their own generations \parencite{panickssery2024llm}, one concrete channel through which $\varepsilon_i>0$ arises even across nominally separate module calls.

\section{Running example: advice in an agent marketplace}
\label{sec:example}

A requester wants to migrate a production system. The constitution says that expected completion time, reliability, security risk, reversibility, and cost all matter. Providers submit free-form reports containing skill claims, a plan, conditional forecasts, references, caveats, and proposed milestones.

A one-call implementation could ask a language model to read the reports, choose a provider, decide whether the work succeeded, explain the result, and assign payments. The same model context would then perform four institutionally different acts: interpret advice, choose an action, produce the evaluator, and write the public account.

A factorized implementation instead creates the following objects.

\begin{enumerate}[leftmargin=1.5em]
    \item \textbf{Hardened advice.} Each report becomes an admissible record: normalized claims, cited spans, declared assumptions, predicted milestone distributions, dependency lists, and provenance. Unsupported persuasive language remains visible to humans but is not silently promoted into official evidence.
    \item \textbf{Public decision record.} The institution selects a provider and plan and records clause-level findings, such as expected reliability, budget, and migration risk. This object may use every provider's admissible advice, including the selected provider's report.
    \item \textbf{Agent-specific evaluation label.} Payment to provider $i$ is anchored to authorized evidence external to $i$'s own report: execution traces, hidden tests, requester acceptance, later audits, cross-fitted judgments from other reports, or a precommitted hybrid. The selected action may determine which outcomes are observed, but provider $i$ cannot directly rewrite how those observations are converted into its own label.
    \item \textbf{Written resolution.} A human-readable memo explains the public decision and later settlement using the decision record and provenance. The memo does not independently choose a different payoff-relevant truth.
\end{enumerate}

The allowed and disallowed paths are shown in \cref{fig:factorization}.

\begin{figure}[H]
\centering
\resizebox{0.98\textwidth}{!}{%
\begin{tikzpicture}[
    node distance=9mm and 11mm,
    box/.style={draw, rounded corners, align=center, minimum height=10mm, text width=31mm, fill=gray!4},
    good/.style={-{Latex[length=2mm]}, thick, blue!60!black},
    neutral/.style={-{Latex[length=2mm]}, thick, black!65},
    bad/.style={-{Latex[length=2mm]}, thick, red!70!black, dashed}
]
\node[box] (report) {Adviser $i$ report $r_i$};
\node[box, right=of report] (hard) {Hardened advice $e_i=h_i(r_i)$};
\node[box, right=of hard] (public) {Public decision record $Z$};
\node[box, below=of public] (label) {Externalized label $Y_i^\circ$};
\node[box, right=of public] (memo) {Provenance-linked resolution $A$};
\node[box, right=of label] (pay) {Payment $G_i(e_i,Z,Y_i)$};

\draw[good] (report) -- (hard);
\draw[good] (hard) -- (public);
\draw[neutral] (public) -- (memo);
\draw[neutral] (public) -- node[right, font=\scriptsize]{conditions} (label);
\draw[neutral] (label) -- (pay);
\draw[neutral] (hard) |- (pay);
\draw[bad, bend right=25] (report.south) to node[below, sloped, font=\scriptsize]{own-answer-key leakage $\le\varepsilon_i$} (label.west);
\end{tikzpicture}%
}
\caption{Advice should influence the public decision. Settlement factorization blocks or bounds the separate path from an adviser's report into the label used to pay that adviser: the operational label $Y_i$ may deviate from the reference $Y_i^\circ$ by at most $\varepsilon_i$ in conditional total variation (\cref{def:leaky}), and \cref{thm:leakage} prices this budget at $2L_i\varepsilon_i$.}
\label{fig:factorization}
\end{figure}

This example also separates factorization from direct verification. Hidden tests are one possible label source. When they are incomplete, the institution may combine tests with audits, peer evidence, market forecasts, or model-mediated judgments. The design question is not ``is the label perfect?'' but ``what is the label, who can influence it, and how does its imperfection affect incentives?''

\section{Model}
\label{sec:model}

\subsection{Environment and advice}

There are $n$ advisers. A latent environment $\theta\in\Theta$ is drawn from a common prior. Adviser $i$ observes a private signal $s_i\in S_i$; write $s=(s_1,\ldots,s_n)$. The institution publishes a natural-language constitution $C$ describing admissibility, the principal's objective, available actions, and payment rules.

Adviser $i$ submits a report $r_i\in\calR_i$. Reports may contain prose, structured fields, files, citations, or tool traces. A report hardener
\[
    h_i^C:\calR_i\to\calE_i
\]
maps the report into an official advice record $e_i=h_i^C(r_i)$. The superscript emphasizes that hardening is constitution-dependent: the same sentence may be admissible under one rulebook and irrelevant under another.

Let
\[
    \phi_i^C:S_i\to\calE_i
\]
be the truthful hardened-advice encoder. It represents the official content that a faithful report should induce when adviser $i$ observes $s_i$.

\begin{definition}[Faithful report set]
For signal $s_i$, define
\[
    T_i^C(s_i)=\{r_i\in\calR_i:h_i^C(r_i)=\phi_i^C(s_i)\}.
\]
A reporting policy is faithful if its support is contained in $T_i^C(s_i)$ for every signal.
\end{definition}

The definition is intentionally semantic rather than syntactic. Faithful paraphrases are strategically equivalent. A constitution may also require a recommendation or plan to satisfy a decision criterion. In that case $\phi_i^C(s_i)$ includes the Bayes-optimal recommendation under the adviser's posterior and the stated principal objective. Misstating evidence or knowingly recommending a dominated action changes the hardened record.

Write $e(r)=(h_1^C(r_1),\ldots,h_n^C(r_n))$ and $e_{-i}$ for the profile excluding adviser $i$.

\subsection{Public decisions, externalized labels, and writing}

The mechanism receives authorized outside data $X$ and randomness $U$. It produces a public decision record
\[
    Z=\rho_C(e(r),X,U)\in\calZ.
\]
The record may include a chosen action $a=d(Z)$, rankings, allocations, clause-level findings, and public commitments. It may depend on all advice.

For each adviser $i$, define a reference evaluation label
\[
    Y_i^\circ=\rho_{i,C}^\circ(Z,e_{-i},X_i,U_i)\in\calY_i.
\]
The label may depend on the public decision, other advisers' evidence, authorized holdout data, and mechanism randomness. Holding these objects fixed, it does not take adviser $i$'s own report or hardened advice as an input. This is \emph{conditional influence-freedom}. It permits the report to change the public action while preventing a second, direct path into the answer key. The deployed pipeline produces an \emph{operational} label $Y_i\in\calY_i$, which ideally coincides with $Y_i^\circ$ but in practice may deviate from it; \cref{def:leaky} quantifies the gap.

Finally, let
\[
    P=\pi_C(e(r),Z,X,U)
\]
be a provenance bundle and
\[
    A=\psi_C(Z,P)
\]
be the written resolution shown to humans.

Adviser $i$ receives transfer
\[
    t_i=G_i(e_i,Z,Y_i)+b_i(e_{-i},X,U),
\]
where $Y_i$ is the operational label and $b_i$ is independent of $r_i$. Adviser $i$ may also care directly about the public decision through $w_i(Z,\theta)$. Utility is
\[
    u_i=t_i+w_i(Z,\theta).
\]
The baseline information-provider case has $w_i\equiv 0$. The more general term covers a vendor who wants to win, an employee who prefers one project, or a forecaster with an ideological stake in the principal's action.

\begin{definition}[Settlement factorization]\label{def:factorization}
A mechanism is \emph{settlement-factorized} if:
\begin{enumerate}[leftmargin=1.5em]
    \item reports affect the public decision record and the reference labels only through hardened advice $e$;
    \item the public decision record $Z$ may depend on all hardened advice;
    \item for every adviser $i$, the reference-label map excludes $e_i$ as a direct input after conditioning on $(Z,e_{-i},X_i,U_i)$;
    \item transfers take the form $G_i(e_i,Z,Y_i)+b_i(e_{-i},X,U)$, where $Y_i$ is the operational label; and
    \item the written resolution is generated from $(Z,P)$ rather than serving as an independent payoff-relevant settlement channel.
\end{enumerate}
The mechanism is \emph{exactly} settlement-factorized if in addition $Y_i=Y_i^\circ$ almost surely for every adviser $i$, and \emph{approximately} settlement-factorized with leakage $(\varepsilon_1,\ldots,\varepsilon_n)$ if each operational label $Y_i$ is $\varepsilon_i$-leaky relative to $Y_i^\circ$ in the sense of \cref{def:leaky}.
\end{definition}

This is a structural definition, not merely a prompt-format convention. Calling two model invocations ``decision'' and ``evaluation'' does not factorize the mechanism if the evaluator receives the paid report, a summary written by that report, or a mutable rubric that the report can influence.

\begin{definition}[Provenance-linked writing]
A writer $A=\psi_C(Z,P)$ is provenance-linked if the memo is generated from the public decision and explicit supporting provenance. Operationally, there should exist a readout map $\chi$ such that $\Prb[\chi(A)=Z]$ is high and every payoff-relevant claim in $A$ is traceable to $P$.
\end{definition}

\subsection{Hardening invariance}

\begin{proposition}[Hardening invariance]\label{prop:hardening}
Suppose the mechanism is exactly settlement-factorized, so that every downstream map---including the operational label and the payment---depends on reports only through hardened advice. Fix adviser $i$, signal $s_i$, and reports $r_i,\widetilde r_i\in T_i^C(s_i)$. For every $r_{-i}$, replacing $r_i$ by $\widetilde r_i$ leaves the public decision, labels, transfers, and written resolution unchanged. Thus all faithful reporting policies are strategically equivalent.
\end{proposition}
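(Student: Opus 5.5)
The argument is a direct composition argument: every object the statement mentions is built by maps that see the report $r_i$ only through $e_i=h_i^C(r_i)$. Faithfulness pins that value down, so nothing downstream can distinguish $r_i$ from $\widetilde r_i$.

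The first step is to note that $r_i,\widetilde r_i\in T_i^C(s_i)$ gives, by the definition of the faithful report set,
\[
h_i^C(r_i)=\phi_i^C(s_i)=h_i^C(\widetilde r_i).
\]
Fixing $r_{-i}$, the hardened profiles therefore coincide: $e(r_i,r_{-i})=e(\widetilde r_i,r_{-i})$, and in particular $e_{-i}$ and $e_i$ are unchanged.

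The second step propagates this equality through each object in the order the model builds them, holding the exogenous $(\theta,s,X,U,X_j,U_j)$ fixed pointwise.
\begin{itemize}
\item $Z=\rho_C(e(r),X,U)$ is unchanged.
\item Each reference label $Y_j^\circ=\rho_{j,C}^\circ(Z,e_{-j},X_j,U_j)$ is unchanged for every $j$.
\item By exact factorization, $Y_j=Y_j^\circ$ almost surely, so the operational labels agree almost surely.
\item The transfers $t_j=G_j(e_j,Z,Y_j)+b_j(e_{-j},X,U)$ are unchanged for all $j$, including $i$.
\item The provenance $P=\pi_C(e(r),Z,X,U)$ is unchanged, and hence so is $A=\psi_C(Z,P)$.
\item The utility $u_i=t_i+w_i(Z,\theta)$ is unchanged realization by realization.
\end{itemize}
Because this holds for every realization of the randomness, the joint law of all outcomes is identical under $r_i$ and $\widetilde r_i$.

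For strategic equivalence, take two faithful policies $\sigma_i,\widetilde\sigma_i$. For each $s_i$, every report in the support of either policy lies in $T_i^C(s_i)$. By the second step, each such report induces the same conditional law of $(Z,Y,t,A)$ given $(s_i,r_{-i})$. Mixing over the policy's randomization therefore gives the same law, and hence the same expected utility for $i$ and the same outcomes for every other adviser, against any opponents' strategy profile.

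The only delicate point is a hypothesis check. The operational label $Y_i$ must not be allowed to read $r_i$ directly; otherwise $Y_i$ could differ across paraphrases even when $e_i$ agrees. This is exactly what the stated hypothesis "every downstream map depends on reports only through hardened advice," together with $Y_i=Y_i^\circ$ almost surely, rules out. I would make this explicit and also handle it measure-theoretically: the almost-sure equality is taken under each report's induced law, and a union of two null sets is null.

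The same subtlety applies to $P$. Clause 1 of \cref{def:factorization} says reports enter only through $e$, and I would read that clause as covering the provenance map as well.
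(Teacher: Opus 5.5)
Your proposal is correct and follows the paper's argument. Faithfulness gives $h_i^C(r_i)=h_i^C(\widetilde r_i)=\phi_i^C(s_i)$, so the hardened profile is unchanged, and exact factorization carries this pointwise in $r_{-i}$ through $Z$, the labels, transfers, provenance, and the written resolution; your realization-by-realization coupling, your use of $Y_j=Y_j^\circ$ almost surely, and your mixing step for strategic equivalence only make explicit what the paper's short proof leaves implicit (and your provenance concern is already settled by the model's definition $P=\pi_C(e(r),Z,X,U)$).
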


The proposition is simple but institutionally important. It turns paraphrase robustness from a vague desideratum into a type-correct requirement: semantically equivalent reports must map to the same official advice record.

\begin{remark}[Approximate hardening invariance]\label{rem:approx-hardening}
Under approximate factorization the operational label may respond to the raw report beyond its hardened content, so exact invariance can fail. It degrades gracefully: two faithful reports induce the same $Z$ and the same reference-label law, so each operational label law lies within $\varepsilon_i$ of the same reference law, and by \cref{lem:tv-expectation} the two expected payments differ by at most $2L_i\varepsilon_i$.
\end{remark}

\subsection{Approximate factorization}

Exact conditional influence-freedom may be expensive. An operational evaluator may receive an imperfectly redacted transcript, use a model fine-tuned on the adviser's earlier outputs, or infer the report from correlated traces. Let $Y_i$ denote the operational label. Throughout, total variation is $\TV(\mu,\nu)=\sup_{A}|\mu(A)-\nu(A)|$, i.e., half the $L^1$ distance between densities; under this convention $\TV(\mathrm{Bernoulli}(a),\mathrm{Bernoulli}(b))=|a-b|$.

\begin{definition}[$\varepsilon_i$-leaky label]\label{def:leaky}
Fix a conditionally influence-free reference label $Y_i^\circ$. The operational label $Y_i$ is $\varepsilon_i$-leaky if, for every signal profile $s$, every report profile $r$ with hardened advice $e=e(r)$, and almost every $z$ in the support of $\Law(Z\mid s,r)$,
\[
\TV\!\left(
\Law(Y_i\mid s,r,Z=z),
\Law(Y_i^\circ\mid s,e_{-i},Z=z)
\right)
\le \varepsilon_i,
\]
where the conditional laws are regular conditional probabilities and the requirement is an essential supremum over positive-probability conditioning events.
\end{definition}

\begin{remark}[Endogeneity of the conditioning]\label{rem:leaky-endogeneity}
Because $Z$ is endogenous in the report, the event $\{Z=z\}$ refers to different underlying events under different report profiles. The definition therefore compares the operational and reference label kernels scenario by scenario, on the support of $Z$ induced by that scenario; the proof of \cref{thm:leakage} conditions on the realized $z$ within each scenario separately, which is legitimate because $Z$ is the same measurable object in both systems. Conditioning on the full report profile $r$, rather than only on hardened advice, is what allows the operational label to depend on presentation and other raw-report features beyond $e_i$---exactly the contamination $\varepsilon_i$ is meant to measure.
\end{remark}

The quantity $\varepsilon_i$ measures direct own-label contamination after conditioning on the public decision and the realized information state. It does not count the intended path $e_i\to Z$. If the action selected by $Z$ changes which outcome is observed, that decision-coupling problem belongs in the ideal externalized margin developed next.

\section{Truthful advice under exact factorization}
\label{sec:exact}

\subsection{The externalized advice margin}

Fix a realized signal profile $s$. Let $r_i^T\in T_i^C(s_i)$ be any faithful report and let $r_i'$ be a deviation. With the other advisers faithful, write $Z^T$ and $Z^D$ for the public decision records induced by $r_i^T$ and $r_i'$. Let $Y_i^{\circ,T}$ and $Y_i^{\circ,D}$ be the corresponding externalized labels.

Define the ideal score margin
\begin{align}
 m_i^\circ(s,r_i')
 =\E\big[&G_i(\phi_i^C(s_i),Z^T,Y_i^{\circ,T})
          -G_i(h_i^C(r_i'),Z^D,Y_i^{\circ,D})\mid s\big].
\label{eq:score-margin}
\end{align}

Define the deviation's direct decision gain
\begin{align}
 d_i(s,r_i')
 =\E\big[w_i(Z^D,\theta)-w_i(Z^T,\theta)\mid s\big].
\label{eq:decision-gain}
\end{align}

The total externalized truthful-advice margin is
\[
    \Gamma_i^\circ(s,r_i')=m_i^\circ(s,r_i')-d_i(s,r_i').
\]
This object deliberately absorbs the hard parts of the upstream mechanism. If advice changes the selected action and therefore changes the distribution of observable outcomes, that effect appears in $m_i^\circ$. If the adviser cares directly about winning or policy, that appears in $d_i$.

\begin{proposition}[Exact factorization preserves truthful advice]\label{prop:exact}
Suppose the mechanism is exactly settlement-factorized and, for every adviser $i$, signal profile $s$, and deviation $r_i'$, one has
\[
    \Gamma_i^\circ(s,r_i')\ge 0.
\]
Then every faithful reporting profile is a signal--ex-post equilibrium (\cref{rem:solution}). If the inequality is strict for every deviation that changes hardened advice, the equilibrium is strict up to hardening equivalence.
\end{proposition}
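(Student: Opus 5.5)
The argument is a direct utility comparison: under exact factorization, adviser $i$'s utility gain from any deviation equals minus the externalized margin $\Gamma_i^\circ$. I fix a realized signal profile $s$, an adviser $i$, a faithful report $r_i^T\in T_i^C(s_i)$, and an arbitrary deviation $r_i'$, with all other advisers reporting faithfully. I take the solution concept of \cref{rem:solution}, signal--ex-post equilibrium, to mean that for every realized $s$ the faithful report maximizes adviser $i$'s conditional expected utility given $s$ against faithful opponents. It therefore suffices to show
\[
\E[u_i\mid s,r_i^T]-\E[u_i\mid s,r_i']=\Gamma_i^\circ(s,r_i').
\]

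\textbf{Step 1 (labels).} Exact factorization gives $Y_i=Y_i^\circ$ almost surely. So under either report the operational label in $G_i$ can be replaced by the reference label, with no change in expectation. Since $Y_i^\circ=\rho^\circ_{i,C}(Z,e_{-i},X_i,U_i)$ and the opponents are faithful, the joint law of $(Z,Y_i)$ under each report is exactly the law of $(Z^T,Y_i^{\circ,T})$ or $(Z^D,Y_i^{\circ,D})$ used in \eqref{eq:score-margin}.

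\textbf{Step 2 (hardened advice).} Reports enter only through $h_i^C$, so $e_i=\phi_i^C(s_i)$ under $r_i^T$ and $e_i=h_i^C(r_i')$ under $r_i'$. By \cref{prop:hardening}, the result does not depend on which faithful report is chosen.

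\textbf{Step 3 (utility decomposition).} Expand $u_i=G_i(e_i,Z,Y_i)+b_i(e_{-i},X,U)+w_i(Z,\theta)$. The term $b_i$ does not depend on $r_i$, and $(e_{-i},X,U)$ have the same law given $s$ under both reports, so it cancels. The $G_i$ difference is $m_i^\circ(s,r_i')$ by Step 1. The $w_i$ difference is $-d_i(s,r_i')$ by \eqref{eq:decision-gain}. Summing gives exactly $\Gamma_i^\circ(s,r_i')\ge 0$, so no deviation is profitable and the faithful profile is an equilibrium.

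\textbf{Step 4 (strictness).} If $h_i^C(r_i')\neq\phi_i^C(s_i)$, the strict hypothesis gives a strict loss. If $h_i^C(r_i')=\phi_i^C(s_i)$, then $r_i'\in T_i^C(s_i)$, and by \cref{prop:hardening} it is outcome-equivalent to $r_i^T$. So the equilibrium is strict up to hardening equivalence.

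The main obstacle is measure-theoretic bookkeeping rather than anything conceptual. I must check that the exogenous randomness $(\theta,X,U,X_i,U_i)$ has the same conditional law given $s$ under both reports, so that the expectations in \eqref{eq:score-margin} and \eqref{eq:decision-gain} are taken under the same primitive probability space. This requires a standing assumption that reports do not alter the distribution of exogenous draws, and I would state that assumption explicitly.
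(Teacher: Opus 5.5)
Your proposal is correct and follows the paper's proof essentially line for line: cancel the report-independent baseline $b_i$, replace $Y_i$ by $Y_i^\circ$ using exactness, and identify the expected utility difference with $m_i^\circ(s,r_i')-d_i(s,r_i')=\Gamma_i^\circ(s,r_i')$. Two points the paper leaves implicit are spelled out in your version: the use of \cref{prop:hardening} to cover deviations that preserve hardened advice, and the standing assumption that reports do not alter the law of $(\theta,X,U,X_i,U_i)$.
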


\begin{remark}[Solution concept]\label{rem:solution}
The equilibrium notion here and in \cref{thm:leakage} is \emph{ex post in signals and interim in mechanism randomness}: faithful reporting is a best response conditional on every realized signal profile $s$, in expectation over the labels and the internal randomness $(X,U,X_i,U_i)$. This is stronger than Bayesian Nash---no reliance on the prior over opponents' signals---and weaker than full ex-post optimality against realized randomness. We call it a \emph{signal--ex-post equilibrium}.
\end{remark}

The result is modular by design. Settlement factorization does not create $\Gamma_i^\circ$. It prevents the implementation from replacing the external evaluator with one partly written by the paid report.

\subsection{Proper scoring as one source of margin}

Consider first an action-neutral or action-complete verification object $V_i\in\mathcal V_i$ whose law is not changed by adviser $i$'s report. The latter includes a vector of potential outcomes or a proxy constructed before the report, even if the principal later selects one action. Let $q_i:\calE_i\to\Delta(\mathcal V_i)$ be the forecast encoded by hardened advice and let $S_i$ be a strictly proper scoring rule. The following assumption isolates the invariance that conditional influence-freedom alone does not supply, because $Y_i^\circ$ may depend on $Z$ and $Z$ depends on $e_i$.

\begin{assumption}[Action-neutrality]\label{ass:neutrality}
The unconditional law of the reference label is invariant to adviser $i$'s report: $\Law(Y_i^\circ\mid s)$ does not depend on $r_i$. Equivalently, any dependence of $Y_i^\circ$ on the public record routes only through components of $Z$ that are themselves invariant to $e_i$---as with an action-complete potential-outcome vector, a pre-report proxy, or a decision taken before the report is read.
\end{assumption}

\begin{corollary}[Proper scoring under an externalized label]\label{cor:proper}
Suppose
\[
    G_i(e_i,Z,Y_i^\circ)=S_i(q_i(e_i),Y_i^\circ)
\]
and truthful hardened advice posterior-implements the externalized label:
\[
    q_i(\phi_i^C(s_i))=\Law(Y_i^\circ\mid s_i).
\]
Under \cref{ass:neutrality} and $w_i\equiv 0$, every faithful reporting profile is a Bayesian Nash equilibrium. It is strict up to hardening equivalence whenever every evidence-changing deviation changes the induced forecast on a positive-probability set.
\end{corollary}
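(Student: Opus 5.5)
The plan is to reduce the corollary to \cref{prop:exact}: show that, under the stated hypotheses, the externalized margin $\Gamma_i^\circ$ is nonnegative for every deviation, with strictness under the stated nondegeneracy condition. Since $w_i\equiv 0$, the decision gain $d_i$ in \eqref{eq:decision-gain} vanishes, so $\Gamma_i^\circ=m_i^\circ$. The task therefore reduces to signing the score margin \eqref{eq:score-margin} with $G_i=S_i(q_i(e_i),Y_i^\circ)$. Because the corollary claims only Bayesian Nash equilibrium, I will work with the margin averaged over $s_{-i}$ conditional on $s_i$, rather than pointwise in $s$. This is the natural level for posterior-implementation.

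\textbf{Step 1: remove the endogeneity.} The difficulty is that $Y_i^{\circ,T}$ and $Y_i^{\circ,D}$ are generated through different public records $Z^T$ and $Z^D$. The payment, however, depends on $Z$ only through the label. I will invoke \cref{ass:neutrality}, which says $\Law(Y_i^\circ\mid s)$ does not depend on $r_i$. It follows that $Y_i^{\circ,T}$ and $Y_i^{\circ,D}$ have the same conditional law given $s$, and hence the same law given $s_i$ after integrating over the faithful opponents' signals. Write $\mu_{s_i}=\Law(Y_i^\circ\mid s_i)$ for this common law. Then
\[
\E[m_i^\circ\mid s_i]=\E_{Y\sim\mu_{s_i}}\big[S_i(q_i(\phi_i^C(s_i)),Y)\big]-\E_{Y\sim\mu_{s_i}}\big[S_i(q_i(h_i^C(r_i')),Y)\big].
\]
This is the step I expect to need the most care. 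One must check that the deviation's forecast $q_i(h_i^C(r_i'))$ is a deterministic function of $r_i'$ and so can be pulled out of the expectation. One must also check that the assumption is used in its unconditional-given-$s$ form, so that conditioning on $Z$ never enters. Both follow because $G_i$ sees $Z$ only through $Y_i^\circ$.

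\textbf{Step 2: apply propriety.} Posterior-implementation gives $q_i(\phi_i^C(s_i))=\mu_{s_i}$. Strict propriety of $S_i$ then makes the expression in Step 1 the score divergence between $\mu_{s_i}$ and $q_i(h_i^C(r_i'))$. This divergence is nonnegative, and it is zero only when the two forecasts coincide. Nonnegativity at the interim level is exactly the Bayesian best-response condition. By exact factorization and \cref{prop:hardening}, the payments of the other advisers' terms, through $b_i$, are unaffected by $r_i$, so faithful reporting is a Bayesian Nash equilibrium.

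\textbf{Step 3: strictness.} Suppose a deviation changes the hardened advice. By hypothesis, it then changes the induced forecast on a set of signals $s_i$ of positive probability. On that set, strict propriety makes the divergence strictly positive. Faithful deviations that preserve $e_i$ give identical outcomes by \cref{prop:hardening}, which yields strictness up to hardening equivalence.
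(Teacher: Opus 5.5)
Your proposal is correct and follows essentially the paper's proof: condition on $s_i$, use \cref{ass:neutrality} to fix the label law at $\Law(Y_i^\circ\mid s_i)$ under both the faithful report and the deviation, and then let posterior-implementation plus strict propriety make the faithful forecast the unique maximizer. Your version is more explicit than the paper's in two places: it explains why the deviation's different record $Z^D$ does not matter (the kernel sees $Z$ only through $Y_i^\circ$, so only the marginal label law given $s_i$ enters), and it recognizes that you are proving the interim analogue of \cref{prop:exact} rather than invoking it, since posterior-implementation at the level of $s_i$ does not give pointwise-in-$s$ nonnegativity.
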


The report may contain much more than the forecast used by the score. Textual scoring mechanisms provide ways to map rich reports into proper numerical scores \parencite{wu2024elicitationgpt,lu2025aligned}; peer-prediction mechanisms can replace direct labels with correlated reports \parencite{miller2005peer,lu2024informative}. The corollary states the condition these modules must satisfy inside the factorized architecture.

\subsection{Advice value as score divergence and information}

For a proper score $S$, define the associated score divergence
\[
    D_S(p,q)=\E_{Y\sim p}[S(p,Y)-S(q,Y)]\ge 0.
\]
For regular differentiable proper scores this is the Bregman divergence associated with the score's convex entropy \parencite{gneiting2007scoring}.

\begin{proposition}[Information margin of faithful advice]\label{prop:info}
Let $S_i$ be a proper score for external label $Y_i^\circ$. Let the full signal $s_i$ induce posterior $p_{s_i}=\Law(Y_i^\circ\mid s_i)$, and let a deviation retain only a garbling $R_i$ of $s_i$---either a deterministic function $R_i=g_i(s_i)$, or a stochastic garbling satisfying the Markov condition $Y_i^\circ \ind R_i \mid s_i$---with posterior $p_{R_i}=\Law(Y_i^\circ\mid R_i)$. Then the expected score loss from the garbling is
\[
    \E\big[S_i(p_{s_i},Y_i^\circ)-S_i(p_{R_i},Y_i^\circ)\big]
    =\E\big[D_{S_i}(p_{s_i},p_{R_i})\big].
\]
Under the logarithmic score, this margin equals
\[
    I(Y_i^\circ;s_i\mid R_i).
\]
\end{proposition}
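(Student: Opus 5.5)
The argument is a tower-property computation followed by the standard log-score identity. Write $Y=Y_i^\circ$, $S=S_i$, and expand the expected score loss by conditioning on the pair $(s_i,R_i)$:
\[
\E\big[S(p_{s_i},Y)-S(p_{R_i},Y)\big]=\E\Big[\E\big[S(p_{s_i},Y)-S(p_{R_i},Y)\,\big|\,s_i,R_i\big]\Big].
\]
The key step is that, conditional on $(s_i,R_i)$, the label $Y$ has law $p_{s_i}$. In the deterministic case this holds because $R_i=g_i(s_i)$ is $\sigma(s_i)$-measurable, so conditioning on $(s_i,R_i)$ is the same as conditioning on $s_i$. In the stochastic case it is exactly the Markov condition $Y\ind R_i\mid s_i$, which gives $\Law(Y\mid s_i,R_i)=\Law(Y\mid s_i)=p_{s_i}$. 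Both forecasts $p_{s_i}$ and $p_{R_i}$ are measurable functions of $(s_i,R_i)$. Hence the inner conditional expectation equals $\E_{Y\sim p_{s_i}}[S(p_{s_i},Y)-S(p_{R_i},Y)]=D_S(p_{s_i},p_{R_i})$ by the definition of the score divergence, and taking outer expectations yields the first identity.

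Two technical points need care. First, $p_{R_i}$ must be read as a fixed forecast plugged into the score, not averaged again. Second, integrability: I would assume $S$ is integrable under the relevant laws, or else interpret both sides in $[0,\infty]$, which is legitimate since $D_S\ge 0$ by properness. Regular conditional distributions exist because the label spaces are standard Borel (implicit in the paper's use of regular conditional probabilities).

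For the logarithmic score $S(p,y)=\log p(y)$, the divergence $D_S(p,q)$ is $\mathrm{KL}(p\,\|\,q)$. The margin is therefore $\E[\mathrm{KL}(\Law(Y\mid s_i)\,\|\,\Law(Y\mid R_i))]$. To identify this with conditional mutual information, I use the Markov structure again: since $\Law(Y\mid s_i,R_i)=\Law(Y\mid s_i)$, the margin equals
\[
\E_{(s_i,R_i)}\big[\mathrm{KL}(\Law(Y\mid s_i,R_i)\,\|\,\Law(Y\mid R_i))\big],
\]
which is the standard expression for $I(Y;s_i\mid R_i)$. Equivalently, it is $H(Y\mid R_i)-H(Y\mid s_i,R_i)$ in the discrete case. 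In the deterministic case, $I(Y;s_i\mid R_i)=I(Y;s_i)-I(Y;R_i)$ by the chain rule, and this agrees with the same expression.

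The main obstacle is the stochastic-garbling case. The whole argument rests on $\Law(Y\mid s_i,R_i)=p_{s_i}$, and without the Markov condition the identity fails: the retained $R_i$ could carry label information beyond $s_i$. The remaining steps are routine measure-theoretic bookkeeping.
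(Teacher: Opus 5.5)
Your proposal is correct and follows the paper's proof essentially step for step: condition on $(s_i,R_i)$, use measurability of $g_i$ or the Markov condition to get label law $p_{s_i}$, recognize the inner conditional expectation as $D_{S_i}(p_{s_i},p_{R_i})$, then specialize to the logarithmic score. In the log-score step you explicitly invoke the Markov condition to replace $\Law(Y\mid s_i)$ by $\Law(Y\mid s_i,R_i)$, which the paper's proof needs but leaves implicit when it calls $\E[\log p(Y\mid s_i)/p(Y\mid R_i)]$ the definition of conditional mutual information; your attribution of the inner identity to the definition of $D_{S_i}$ rather than to properness is likewise more accurate than the paper's wording.
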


This gives the externalized margin an interpretable scale. A report is valuable to the extent that it retains information about the externalized evaluation object beyond what a coarser or strategic report reveals. Under log scoring, adding a second authorized label component $V$ to an existing label $Y$ contributes
\[
    I((Y,V);s_i\mid R_i)
    =I(Y;s_i\mid R_i)+I(V;s_i\mid Y,R_i).
\]
Thus a hybrid label improves the truthful margin exactly when it contains additional signal-relevant information, not merely additional text.

\subsection{Scoring recommendations rather than only forecasts}

Advice often culminates in a recommendation. Let $\alpha_i(e_i)\in\mathcal A$ be the action recommended by hardened advice and let $v(a,y)$ be the principal's value from action $a$ under externalized state label $y$.

\begin{corollary}[Decision-value score]\label{cor:decision}
Suppose
\[
    G_i(e_i,Z,Y_i^\circ)=v(\alpha_i(e_i),Y_i^\circ),
\]
$w_i\equiv 0$, and faithful advice recommends an action maximizing expected principal value:
\[
    \alpha_i(\phi_i^C(s_i))\in
    \arg\max_{a\in\alpha_i(\calE_i)}
    \E[v(a,Y_i^\circ)\mid s_i].
\]
Under \cref{ass:neutrality}, faithful advice is a Bayesian best response.
\end{corollary}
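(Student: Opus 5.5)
The plan is to reduce the statement to a pointwise Bayes-decision argument, mirroring the proof of \cref{cor:proper}. Fix adviser $i$ and signal $s_i$, and let the other advisers report faithfully.

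\textbf{Step 1: the payment's law depends on $r_i$ only through the recommendation.} Under exact factorization, $Y_i=Y_i^\circ$ almost surely, and the transfer is $v(\alpha_i(e_i),Y_i^\circ)+b_i(e_{-i},X,U)$. The term $b_i$ does not depend on $r_i$, so it cancels from every comparison. Since $w_i\equiv 0$, adviser $i$'s interim payoff from a report $r_i'$ with hardened advice $e_i'=h_i^C(r_i')$ is
\[
\E\big[v(\alpha_i(e_i'),Y_i^{\circ,D})\mid s_i\big],
\]
up to a constant, where $Y_i^{\circ,D}$ is the reference label induced by the deviation.

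\textbf{Step 2: remove the decision-coupling path.} Here I invoke \cref{ass:neutrality}. The law of $Y_i^\circ$ given $s$ does not depend on $r_i$. Integrating over $s_{-i}$ given $s_i$ then shows that $\Law(Y_i^{\circ,D}\mid s_i)=\Law(Y_i^{\circ,T}\mid s_i)$. The action $a'=\alpha_i(e_i')$ is a deterministic function of the report, so the deviation's payoff is $\E[v(a',Y_i^\circ)\mid s_i]$, evaluated under the common law.

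A subtlety arises if $r_i'$ is randomized or correlated with the label. Pure reports are deterministic given the adviser's choice, so conditioning on $s_i$ suffices. Mixed deviations are averages of pure ones, so no further work is needed for them.

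\textbf{Step 3: conclude from the argmax hypothesis.} Because $a'$ ranges over $\alpha_i(\calE_i)$, the assumed argmax property gives
\[
\E[v(\alpha_i(\phi_i^C(s_i)),Y_i^\circ)\mid s_i]\ \ge\ \E[v(a',Y_i^\circ)\mid s_i].
\]
So the faithful report is a best response. Equivalently, $\Gamma_i^\circ\ge 0$ after averaging over $s_{-i}$, in the spirit of \cref{prop:exact}, but at the interim (Bayesian) level rather than signal–ex-post.

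\textbf{Main obstacle.} The main obstacle is Step 2. One must justify that neutrality, stated conditional on the full profile $s$, transfers to the interim conditioning on $s_i$. This step also needs the ideal-label payoff to coincide with the operational one, which exact factorization ensures. The argmax itself is immediate once the label law is shown to be report-invariant.
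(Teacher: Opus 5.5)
Your proposal is correct and follows the paper's own argument. You condition on $s_i$, use \cref{ass:neutrality} to make the label law invariant to the report, cancel the baseline and the decision utility ($w_i\equiv 0$), and conclude from the argmax hypothesis over $\alpha_i(\calE_i)$. Your added step of integrating the profile-level invariance over $s_{-i}$ given $s_i$ makes explicit a passage the paper leaves implicit, but the proof is the same.
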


This score evaluates the recommendation against an external outcome whether or not the principal follows it.

\subsection{When action-neutrality fails: the decision-market boundary}
\label{sec:impossibility}

\Cref{ass:neutrality} fails mechanically whenever the principal deterministically follows the advice and only the chosen action's outcome is observed: the label is then a function of the report, and ordinary proper-scoring logic breaks because the advice changes the observation process. This is not a shortcoming of the factorized architecture but a known boundary. No symmetric scoring rule is quasi-strictly proper for the deterministic max decision rule \parencite{othman2010decision}, and strict properness for decision making requires stochastic decision rules with full support over actions \parencite{chen2011decision,chen2014decision}. The routes around this boundary identified by that literature---full-support randomization \parencite{chen2014decision}, compensation rules \parencite{boutilier2011experts}, proxy or pre-decision outcomes \parencite{wang2021decision,wang2023proxy}, performative fixed points \parencite{oesterheld2023performative}, and zero-sum joint scores \parencite{hudson2024joint}---are, in our terms, interchangeable modules for constructing an action-neutral or action-complete $Y_i^\circ$. The results above are agnostic to which module is used: the margin it creates enters as $m_i^\circ$, and its residual implementation leakage as $\varepsilon_i$. Settlement factorization is downstream of that choice.

\section{Approximate factorization and the price of leakage}
\label{sec:leakage}

This section carries the paper's core results, and their order tells one story. First the price: an $\varepsilon_i$-leaky label costs truthful margins at most $2L_i\varepsilon_i$, and a matching construction shows the constant is exact---half own-report manipulation, half pandering to a biased evaluator. Then two inversions. Worst-case incentive erosion over a canonical probe class \emph{identifies} the leakage, so $\varepsilon_i$ is a behavioral quantity, not merely an architectural one; and a ghost-reference construction shows that every mechanism, in every presentation, already carries an influence-free reference---there are no unfactorized mechanisms, only unmeasured ones; it is the revelation-principle move for this setting. A redundant-information family then makes the title quantitative: settlement on the shared record dilutes margins exponentially while a factorized label keeps a constant one. The section closes with the levers---stakes, audits, and the principal's purchasing problem---that turn the price schedule into design.

\subsection{Label sensitivity and the leakage theorem}

\begin{lemma}[TV--expectation bound]\label{lem:tv-expectation}
Let $f:\calY\to[a,b]$ be measurable and let $\mu,\nu$ be probability measures on $\calY$. Then
$\left|\E_{Y\sim\mu}[f(Y)]-\E_{Y\sim\nu}[f(Y)]\right|\le (b-a)\,\TV(\mu,\nu)$.
\end{lemma}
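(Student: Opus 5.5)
The plan is to recentre $f$ at the midpoint of its range and reduce to the Jordan decomposition of the signed measure $\mu-\nu$. Let $c=(a+b)/2$ and $g=f-c$, so that $|g|\le (b-a)/2$ pointwise. Since $\mu$ and $\nu$ are both probability measures, the constant cancels:
\[
\E_\mu[f]-\E_\nu[f]=\int g\,d(\mu-\nu).
\]
This recentring is the only real idea. Without it, the naive bound $\sup|f|\cdot\|\mu-\nu\|_1$ gives $2\max(|a|,|b|)\,\TV$, which is not $(b-a)\,\TV$.

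Next, take a common dominating measure $\lambda=\mu+\nu$ and write $p=d\mu/d\lambda$ and $q=d\nu/d\lambda$. Let $B=\{p>q\}$. The paper's convention that $\TV(\mu,\nu)=\sup_A|\mu(A)-\nu(A)|$ equals half the $L^1$ distance gives
\[
\int (p-q)^+\,d\lambda=\int (p-q)^-\,d\lambda=\mu(B)-\nu(B)=\TV(\mu,\nu).
\]
The two integrals agree because $\int(p-q)\,d\lambda=0$.

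Then bound the two pieces separately:
\[
\int g\,(p-q)\,d\lambda\le \tfrac{b-a}{2}\int (p-q)^+\,d\lambda+\tfrac{b-a}{2}\int (p-q)^-\,d\lambda=(b-a)\,\TV(\mu,\nu).
\]
Applying the same argument to $-g$, or equivalently swapping $\mu$ and $\nu$, gives the absolute value.

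An alternative with the same steps is to use $\E[f]=a+\int_0^{b-a}\Prb(f-a>t)\,dt$. Each tail probability differs between $\mu$ and $\nu$ by at most $\TV$, and integrating over an interval of length $b-a$ yields the bound directly.

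The main obstacle is only bookkeeping: checking that the paper's $\TV$ convention, with no factor of two, matches the half-$L^1$ identity, and that the measurability of $f$ makes all the integrals well defined. Both are routine.
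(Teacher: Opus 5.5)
Your proof is correct and is essentially the paper's: both recentre $f$ at $(a+b)/2$, bound $\int g\,d(\mu-\nu)$ by $\tfrac{b-a}{2}$ times the total-variation norm of $\mu-\nu$, and identify that norm with $2\TV(\mu,\nu)$. The paper gets that identity from the Hahn decomposition, while you make the same decomposition explicit through densities and the positive set $B=\{p>q\}$; your layer-cake alternative is also valid.
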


\begin{definition}[Label sensitivity]\label{def:stability}
The payment kernel $G_i$ is $L_i$-stable in label space if, for every $(e_i,z)$ and probability measures $\mu,\nu$ on $\calY_i$,
\[
\left|
\E_{Y\sim\mu}[G_i(e_i,z,Y)]-
\E_{Y\sim\nu}[G_i(e_i,z,Y)]
\right|
\le L_i\TV(\mu,\nu).
\]
If $G_i(e_i,z,y)\in[a_i,b_i]$, one may take $L_i=b_i-a_i$ by \cref{lem:tv-expectation}. This includes bounded Brier scores and clipped log scores; unbounded kernels such as the unclipped log score are not $L$-stable for any finite $L$, so boundedness is a substantive requirement.
\end{definition}

\begin{theorem}[Approximate settlement factorization preserves advice margins]\label{thm:leakage}
Fix adviser $i$. Suppose the mechanism has a conditionally influence-free reference label $Y_i^\circ$, an $\varepsilon_i$-leaky operational label $Y_i$, and an $L_i$-stable payment kernel. For every signal profile $s$ and deviation $r_i'$, let $\Gamma_i(s,r_i')$ be the truthful utility margin under the operational label and $\Gamma_i^\circ(s,r_i')$ the corresponding margin under the reference label. Then
\[
    \Gamma_i(s,r_i')
    \ge
    \Gamma_i^\circ(s,r_i')-2L_i\varepsilon_i.
\]
Consequently, if
\[
    \gamma_i:=\inf_{s,r_i':\,h_i^C(r_i')\ne\phi_i^C(s_i)}
    \Gamma_i^\circ(s,r_i')
    >2L_i\varepsilon_i,
\]
faithful reporting remains a strict signal--ex-post equilibrium up to hardening equivalence.
\end{theorem}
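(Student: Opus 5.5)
The plan is to write the operational margin as the reference margin plus two error terms, one for the truthful scenario and one for the deviation scenario, and to bound each by $L_i\varepsilon_i$. Fix $s$ and $r_i'$, with the other advisers faithful, and write $r^T=(r_i^T,r_{-i}^T)$ and $r^D=(r_i',r_{-i}^T)$. The public record $Z$, the hardened advice, the side payment $b_i(e_{-i},X,U)$, and the decision term $w_i(Z,\theta)$ are identical in the operational and reference systems. Only the label entering $G_i$ differs. The side payment does not depend on $r_i$, so it cancels from both margins. Hence
\[
\Gamma_i(s,r_i')-\Gamma_i^\circ(s,r_i')=\Delta^T-\Delta^D,
\]
where
\[
\Delta^\star=\E\big[G_i(e_i^\star,Z^\star,Y_i^\star)-G_i(e_i^\star,Z^\star,Y_i^{\circ,\star})\mid s\big],\qquad \star\in\{T,D\}.
\]
Here $e_i^T=\phi_i^C(s_i)$, $e_i^D=h_i^C(r_i')$, and $Y_i^\star$, $Y_i^{\circ,\star}$ are the operational and reference labels in scenario $\star$. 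The claimed inequality then follows from $|\Delta^T|\le L_i\varepsilon_i$ and $|\Delta^D|\le L_i\varepsilon_i$.

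To bound $\Delta^\star$, I will condition on the realized record inside each scenario separately, as \cref{rem:leaky-endogeneity} allows. Write $\Delta^\star$ as the outer integral over $z\sim\Law(Z\mid s,r^\star)$ of the inner difference
\[
\E_{Y\sim\Law(Y_i\mid s,r^\star,Z=z)}\big[G_i(e_i^\star,z,Y)\big]-\E_{Y\sim\Law(Y_i^\circ\mid s,e_{-i},Z=z)}\big[G_i(e_i^\star,z,Y)\big].
\]
For fixed $(e_i^\star,z)$, $L_i$-stability (\cref{def:stability}) bounds this inner difference by $L_i$ times the total variation between the two conditional laws. \Cref{def:leaky} bounds that total variation by $\varepsilon_i$ for almost every $z$ in the support. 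Integrating over $z$ gives $|\Delta^\star|\le L_i\varepsilon_i$. Adding the two scenarios gives $\Gamma_i\ge\Gamma_i^\circ-2L_i\varepsilon_i$.

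For the consequence, any deviation with $h_i^C(r_i')\neq\phi_i^C(s_i)$ has $\Gamma_i\ge\gamma_i-2L_i\varepsilon_i>0$, so it is strictly unprofitable conditional on every $s$. This is the signal--ex-post notion of \cref{rem:solution}. Faithful deviations that preserve the hardened advice still need handling. Hardening invariance (\cref{prop:hardening}) gives them the same utility only under exact factorization; approximately, \cref{rem:approx-hardening} gives equality up to $2L_i\varepsilon_i$. So the conclusion is stated as strictness up to hardening equivalence. I would note that the claim concerns deviations that change the hardened record.

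The main obstacle is measure-theoretic bookkeeping, not the inequality itself. Three points need care. First, the conditioning on $Z=z$ must be made rigorous through regular conditional probabilities. Second, the reference label must be evaluated conditional on the same $z$ and on $e_{-i}$ as the operational label, so that the joint law of $(Z,Y_i^\circ)$ is a genuine coupling with the same $Z$-marginal. Third, the stability bound must hold uniformly in $(e_i,z)$, so that it can be integrated. I would first verify that the reference system's law of $Z$ coincides with the operational one. This holds because $Z=\rho_C(e,X,U)$ is shared, and it is what lets the outer integrals match and the $w_i$ terms cancel exactly.
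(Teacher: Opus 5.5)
Your proposal is correct and follows essentially the same route as the paper: you split the erosion into a truthful-side and a deviation-side label error, bound each by $L_i\varepsilon_i$ pointwise in the realized record $z$ within its own scenario, and integrate using the fact that the law of $Z$, the term $w_i$, and the baseline $b_i$ coincide across the operational and reference systems. You invoke $L_i$-stability (\cref{def:stability}) directly, whereas the paper applies \cref{lem:tv-expectation} to a bounded range; your version is, if anything, a slightly cleaner match to the theorem's stated hypothesis.
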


The factor of two appears because both sides of the incentive comparison can move: the operational label under faithful advice may differ from its reference, and the operational label under a deviation may differ from its own reference. \Cref{prop:tightness} below shows this is not an artifact of the proof: both movements can be realized simultaneously and adversarially, so the constant is best possible.

A useful decomposition separates score quality from direct decision interests. Let
\[
\underline m_i
=\inf_{s,r_i'}m_i^\circ(s,r_i')
\quad\text{and}\quad
D_i=\sup_{s,r_i'}d_i(s,r_i'),
\]
noting that $D_i\ge 0$ since $r_i'$ ranges over all reports, including faithful ones.

\begin{corollary}[Decision interests plus leakage]\label{cor:interests}
If $\underline m_i>D_i+2L_i\varepsilon_i$, faithful advice is a strict signal--ex-post equilibrium up to hardening equivalence.
\end{corollary}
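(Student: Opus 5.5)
The plan is to reduce \cref{cor:interests} to \cref{thm:leakage} by bounding the reference margin $\Gamma_i^\circ$ from below uniformly over evidence-changing deviations, and then invoking the strict-equilibrium clause of that theorem.

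\emph{Step 1: uniform lower bound on $\Gamma_i^\circ$.} Fix a signal profile $s$ and a deviation $r_i'$ with $h_i^C(r_i')\ne\phi_i^C(s_i)$. The decomposition $\Gamma_i^\circ(s,r_i')=m_i^\circ(s,r_i')-d_i(s,r_i')$ holds by definition. Since $\underline m_i$ is an infimum over all pairs $(s,r_i')$, we have $m_i^\circ(s,r_i')\ge\underline m_i$. Since $D_i$ is a supremum, we have $d_i(s,r_i')\le D_i$. Hence $\Gamma_i^\circ(s,r_i')\ge\underline m_i-D_i$ for every such pair. Taking the infimum over evidence-changing deviations gives $\gamma_i\ge\underline m_i-D_i$.

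\emph{Step 2: apply the theorem.} By hypothesis $\underline m_i-D_i>2L_i\varepsilon_i$, so $\gamma_i>2L_i\varepsilon_i$. This is exactly the sufficient condition in \cref{thm:leakage}, which yields strict signal--ex-post optimality of faithful reporting up to hardening equivalence. Running this argument for every adviser $i$ gives the profile-level equilibrium claim.

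\emph{Step 3: check the quantifiers.} The only delicate point is bookkeeping. $\underline m_i$ and $D_i$ are taken over all $(s,r_i')$, while $\gamma_i$ ranges only over evidence-changing deviations. Restricting to a subset can only raise an infimum, so the inequality from Step 1 runs in the needed direction. We should also state the implicit assumptions: the reference label is conditionally influence-free, the operational label is $\varepsilon_i$-leaky, and the payment kernel is $L_i$-stable, exactly as \cref{thm:leakage} requires. Hardening-equivalent deviations need no margin, because by \cref{prop:hardening} (or \cref{rem:approx-hardening} in the approximate case) they are covered by the \enquote{up to hardening equivalence} qualifier. Beyond this bookkeeping there is no real obstacle; the corollary is a direct specialization.
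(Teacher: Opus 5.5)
Your proposal is correct and is essentially the paper's proof: bound $\Gamma_i^\circ(s,r_i')\ge\underline m_i-D_i$ directly from the infimum and supremum definitions, then invoke \cref{thm:leakage}. The paper applies the theorem's pointwise inequality $\Gamma_i\ge\Gamma_i^\circ-2L_i\varepsilon_i$ rather than routing through $\gamma_i$, which is immaterial.

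One aside on your Step~3, which looks at exactly the right spot. The report $r_i'=r_i^T$ is itself an admissible ``deviation'' with $m_i^\circ=0$, so the literal infimum over all $(s,r_i')$ forces $\underline m_i\le 0\le D_i$. The hypothesis can therefore never hold as written---the corollary is only non-vacuous if $\underline m_i$ is taken over evidence-changing deviations, and your argument goes through verbatim under that reading.
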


This condition connects the paper to classical scoring rules for decision makers. Compensation or joint-scoring mechanisms can increase $\underline m_i$ or reduce the effective $D_i$; factorization reduces $\varepsilon_i$. These are distinct design levers, and the remainder of this section makes each of them quantitative.

\subsection{Tightness of the constant}

\begin{proposition}[Binary own-label manipulation]\label{prop:brier}
For every $\varepsilon\in(0,1/2)$ there is a one-adviser binary-label environment with reference label $Y^\circ\sim\mathrm{Bernoulli}(1/2)$ and Brier payment
\[
    G(q,y)=-(q-y)^2
\]
such that truthful reporting induces operational label $Y^\circ$, while a deviation reports $q'=1/2+\varepsilon$ and shifts the operational label to $Y'\sim\mathrm{Bernoulli}(1/2+\varepsilon)$. The label-law shift has total variation $\varepsilon$, and the deviation increases expected payment by exactly $\varepsilon^2$.
\end{proposition}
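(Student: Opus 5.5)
The proof will be an explicit construction followed by two short computations. Take one adviser whose signal is uninformative, so the faithful posterior about the reference label is $\mathrm{Bernoulli}(1/2)$ and the truthful hardened forecast is $q^T=1/2$. Let the reference label $Y^\circ$ be an independent fair coin, drawn from the mechanism randomness $U_i$ and hence conditionally influence-free. Define the operational pipeline so that the truthful report passes $Y^\circ$ through unchanged. The deviating report $q'=1/2+\varepsilon$ instead leaks into the evaluator, which outputs $Y'$. Concretely, $Y'=1$ whenever $Y^\circ=1$, and when $Y^\circ=0$ the evaluator flips it to $1$ with probability $2\varepsilon$. The public record $Z$ is trivial, so conditioning on $Z$ plays no role. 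The restriction $\varepsilon<1/2$ keeps $2\varepsilon\le 1$, so the flip probability is valid, and it also keeps $q'\in(1/2,1)$.

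First I will verify the total-variation claim. $Y'\sim\mathrm{Bernoulli}(1/2+\varepsilon)$, and by the stated convention $\TV(\mathrm{Bernoulli}(a),\mathrm{Bernoulli}(b))=|a-b|$ this gives a TV distance of $\varepsilon$ from $\Law(Y^\circ)$. So the operational label is $\varepsilon$-leaky in the sense of \cref{def:leaky}: the truthful scenario has zero leakage and the deviation scenario has exactly $\varepsilon$.

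Second I will compute the payments using the identity $\E[-(q-Y)^2]=-(q-p)^2-p(1-p)$ for $Y\sim\mathrm{Bernoulli}(p)$.
\begin{itemize}[leftmargin=1.5em]
\item Truthful: $q=p=1/2$, so the expected payment is $-1/4$.
\item Deviation: $q=p=1/2+\varepsilon$, so the expected payment is $-(1/4-\varepsilon^2)$.
\end{itemize}
The gain is therefore exactly $\varepsilon^2$.

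The main subtlety is making the deviation internally consistent: the reported forecast must match the label law it induces, so that Brier properness rewards it. I also want to note, though it is not required, how this fits the decomposition. Holding the label law fixed at $\mathrm{Bernoulli}(1/2+\varepsilon)$, misreporting costs nothing relative to reporting $1/2$ (it actually gains $\varepsilon^2$). This shows the manipulation gain comes entirely from the label channel. It is quadratic here rather than the linear $L\varepsilon$, which the later tightness result presumably repairs with a different kernel.
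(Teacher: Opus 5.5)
Your proposal is correct and follows the paper's proof essentially line by line: truthful $q=1/2$ earns $-1/4$, the deviation $q'=1/2+\varepsilon$ under $Y'\sim\mathrm{Bernoulli}(1/2+\varepsilon)$ earns $-(1/2+\varepsilon)(1/2-\varepsilon)=-1/4+\varepsilon^2$, and the Bernoulli total-variation convention gives the shift of $\varepsilon$. Your explicit coupling (flipping $0\to1$ with probability $2\varepsilon$) and the identity $\E[-(q-Y)^2]=-(q-p)^2-p(1-p)$ add concreteness that the paper leaves implicit, but they do not change the argument.
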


The Brier example establishes the basic point: once a report can move its own evaluation target, strict propriety against a fixed target no longer protects the institution. Its gain, however, is quadratic in $\varepsilon$, because the expected Brier score is locally flat in the label law at the truthful report. The linear rate of \cref{thm:leakage} is attained by payment kernels that respond linearly to the label law, and the constant $2$ is realized by combining own-report manipulation with a fixed evaluator bias.

\begin{proposition}[The leakage bound is exactly tight]\label{prop:tightness}
Fix $\varepsilon\in(0,1/2)$. Consider a one-adviser binary-label environment with reference label $Y^\circ\sim\mathrm{Bernoulli}(1/2)$, trivial public record, $w\equiv 0$, and the calling kernel
\[
    G(e,y)=y\,\mathbf 1\{q(e)>1/2\}+(1-y)\,\mathbf 1\{q(e)\le 1/2\}\in[0,1],
\]
so that $L=1$ by \cref{lem:tv-expectation} and every report has reference margin $\Gamma^\circ=0$.
\begin{enumerate}[label=(\alph*),leftmargin=1.9em]
    \item \textbf{Own-report manipulation.} If the operational label equals $Y^\circ$ unless the hardened report declares $q>1/2$, in which case $Y\sim\mathrm{Bernoulli}(1/2+\varepsilon)$, then $Y$ is $\varepsilon$-leaky and the deviation $q'=1/2+\varepsilon$ satisfies
    \[
        \Gamma(s,r')=\Gamma^\circ(s,r')-L\varepsilon.
    \]
    \item \textbf{Pandering to a biased evaluator.} If the operational label is $Y\sim\mathrm{Bernoulli}(1/2+\varepsilon)$ regardless of the report, then $Y$ is $\varepsilon$-leaky and the same deviation satisfies
    \[
        \Gamma(s,r')=\Gamma^\circ(s,r')-2L\varepsilon.
    \]
\end{enumerate}
Hence the constant $2$ in \cref{thm:leakage} is best possible, with the truthful side and the deviation side of the comparison each accounting for $L\varepsilon$.
\end{proposition}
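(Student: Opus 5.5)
We prove the proposition by direct computation in the stated one-adviser environment. Take the signal to be degenerate, so that faithful hardened advice $\phi^C(s)$ encodes the posterior $q=\Law(Y^\circ\mid s)(\{1\})=1/2$. Because the public record is trivial, conditioning on $Z=z$ in \cref{def:leaky} is vacuous. Leakiness therefore reduces to comparing the unconditional operational law with $\mathrm{Bernoulli}(1/2)$, separately for each report.

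The order of steps is as follows.
\begin{enumerate}[leftmargin=1.5em]
\item Verify the normalization. The kernel takes values in $[0,1]$, so \cref{lem:tv-expectation} gives $L=1$. This $L$ cannot be lowered. For any $e$ with $q(e)>1/2$, the measures $\mu=\delta_1$ and $\nu=\delta_0$ give an expected-payment gap of $1=\TV(\mu,\nu)$, so the bound of \cref{thm:leakage} is stated with the smallest admissible sensitivity.
\item Show $\Gamma^\circ=0$ for every report. Under $Y^\circ\sim\mathrm{Bernoulli}(1/2)$, both calls pay $1/2$ in expectation, so $m^\circ=0$. Since $w\equiv0$, also $d=0$.
\item Check leakiness in case (a). The truthful report has $q=1/2$, which is not above $1/2$, so its operational law equals the reference law and the distance is $0$. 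The deviation $q'=1/2+\varepsilon$ induces $\mathrm{Bernoulli}(1/2+\varepsilon)$, whose distance from the reference law is $\varepsilon$ under the paper's TV convention. Hence $Y$ is $\varepsilon$-leaky.
\item Compute payments in case (a). Truthful reporting earns $\E[1-Y^\circ]=1/2$. The deviation earns $\E[Y]=1/2+\varepsilon$. Thus $\Gamma=-\varepsilon=\Gamma^\circ-L\varepsilon$.
\item Check leakiness in case (b). For every report the operational law is $\mathrm{Bernoulli}(1/2+\varepsilon)$, at distance exactly $\varepsilon$ from the reference law, so $Y$ is again $\varepsilon$-leaky.
\item Compute payments in case (b). Truthful reporting calls low and earns $1/2-\varepsilon$. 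The deviation calls high and earns $1/2+\varepsilon$. Thus $\Gamma=-2\varepsilon=\Gamma^\circ-2L\varepsilon$.
\end{enumerate}

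The conclusion then follows. The deviation changes hardened advice, since $q'\neq1/2$, so it lies in the class of deviations over which $\gamma$ is taken. In case (b), the inequality of \cref{thm:leakage} holds with equality at $L=1$ for every $\varepsilon\in(0,1/2)$. Therefore no constant $c<2$ can replace $2$: otherwise $-2\varepsilon\ge-c\varepsilon$, which is false for $\varepsilon>0$.

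The decomposition also falls out of the two cases. Case (a) moves only the deviation side of the comparison and costs $L\varepsilon$. Case (b) additionally moves the truthful side, because a biased evaluator penalizes the honest low call, and this adds the second $L\varepsilon$.

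The only delicate point is the claim of best-possibility. It needs two facts: $L=1$ is the smallest sensitivity for which the kernel is stable (step 1), and leakiness holds in the essential-supremum sense for every report profile, not only for the two reports compared. Both follow from the observations above, because the record is trivial and the operational laws take only the two values $\mathrm{Bernoulli}(1/2)$ and $\mathrm{Bernoulli}(1/2+\varepsilon)$.
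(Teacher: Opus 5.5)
Your proposal is correct and follows the paper's proof step for step: $L=1$ from the $[0,1]$ range, the same report-by-report total-variation check for $\varepsilon$-leakiness, and the same branch payoffs ($1/2$ versus $1/2+\varepsilon$ in case (a), $1/2-\varepsilon$ versus $1/2+\varepsilon$ in case (b)). Your extra check that $L=1$ is the minimal stability constant is correct but not needed: to rule out any constant $c<2$, it suffices that the kernel is $1$-stable and that the bound with $c$ fails at $L=1$.
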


\begin{remark}[What $\varepsilon$ measures]\label{rem:centered}
\Cref{def:leaky} measures total deviation of the operational label from the precommitted reference pipeline. \Cref{prop:tightness}(b) shows this bundles two strategically relevant components: report-caused contamination and a fixed evaluator bias that an adviser can pander to. A centered variant---bounding the total variation between the operational label laws induced by any two of adviser $i$'s own reports---isolates the manipulation component and, by a one-sided version of the argument for \cref{thm:leakage}, degrades margins by at most $L_i\varepsilon_i$ relative to the operational-truthful benchmark; but it leaves calibration to the reference uncontrolled. Institutions precommit to reference pipelines, not to their own biases, so we keep the uncentered definition.
\end{remark}

\subsection{A behavioral converse: erosion identifies leakage}
\label{sec:converse}

\Cref{thm:leakage} says leakage bounds erosion; \cref{prop:tightness} says the bound binds somewhere. Together they say more: worst-case erosion \emph{recovers} $\varepsilon_i$ exactly, so leakage---defined so far through pipeline architecture---has an equivalent definition purely in terms of incentives.

\begin{proposition}[Erosion identifies leakage]\label{prop:converse}
Fix adviser $i$'s operational label channel and its reference; for each hardened record $e$ let $\varepsilon_i(e)$ be the conditional total-variation gap of \cref{def:leaky} at input $e$, so that $\varepsilon_i=\sup_e\varepsilon_i(e)$, and let
\[
    E_i^\star\;:=\;\sup_{e\ne e'}\bigl[\varepsilon_i(e)+\varepsilon_i(e')\bigr].
\]
Consider the canonical class of probes: one-adviser environments with trivial public record and $w\equiv0$ in which any ordered pair of \emph{distinct} hardened records $(e,e')$ is realizable as (faithful, deviation), equipped with any $L$-stable payment kernel. Then
\[
    \sup\;\bigl[\Gamma_i^\circ(s,r_i')-\Gamma_i(s,r_i')\bigr]\;=\;L\,E_i^\star,
    \qquad\text{with}\qquad
    L\,\varepsilon_i\ \le\ L\,E_i^\star\ \le\ 2L\,\varepsilon_i ,
\]
the supremum taken over the class. If the supremal leakage is approached along at least two distinct hardened records---in particular whenever a report-independent evaluator-bias component is present, as in \cref{prop:tightness}(b)---then $E_i^\star=2\varepsilon_i$ and worst-case erosion identifies leakage exactly. Consequently a mechanism whose margins erode by at most $2L\varepsilon'$ across the class is $2\varepsilon'$-leaky in general and $\varepsilon'$-leaky under the two-record condition; the converse direction is \cref{thm:leakage}. Within this class, behavioral robustness and architectural leakage are the same property up to a factor of at most two, and exactly the same under the richness condition.
\end{proposition}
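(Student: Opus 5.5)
The proof has three parts: an upper bound on erosion for every probe, a matching lower bound from a constructed probe, and then the sandwich inequalities and the behavioral reading.

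\emph{Upper bound.} I would rerun the proof of \cref{thm:leakage}, but keep the two sides separate instead of bounding each by $\varepsilon_i$. In a probe with trivial public record and $w\equiv0$, the faithful record $e$ and the deviation $e'$ give
\[
\Gamma_i^\circ-\Gamma_i=\bigl(\E^\circ_e G(e,Y)-\E_e G(e,Y)\bigr)+\bigl(\E_{e'}G(e',Y)-\E^\circ_{e'}G(e',Y)\bigr).
\]
$L$-stability (\cref{def:stability}) bounds each bracket by $L$ times the total-variation gap at its own input. That gives $L\,\varepsilon_i(e)+L\,\varepsilon_i(e')\le L\,E_i^\star$, because the probe realizes distinct records.

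\emph{Lower bound.} Fix distinct $e,e'$ and a small slack $\eta$. At input $e$, pick a measurable set $A$ that nearly attains the total-variation gap: $\Prb_e^\circ(A)-\Prb_e(A)\ge\varepsilon_i(e)-\eta$, orienting $A$ so the reference puts more mass on it. At input $e'$, pick $B$ with $\Prb_{e'}(B)-\Prb^\circ_{e'}(B)\ge\varepsilon_i(e')-\eta$, this time oriented so the operational law puts more mass on it. Define the kernel
\[
G(e,y)=L\,\mathbf 1_A(y),\qquad G(e',y)=L\,\mathbf 1_B(y),\qquad G\equiv 0 \text{ on all other records}.
\]
Its range has width $L$, so it is $L$-stable by \cref{lem:tv-expectation}. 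With the displayed orientations, both brackets above are at least $L(\varepsilon_i(\cdot)-\eta)$. Letting $\eta\to0$ and taking the supremum over pairs gives erosion $L\,E_i^\star$, which together with the upper bound yields equality. This is the general form of \cref{prop:tightness}(b), with indicator kernels playing the role of the calling kernel.

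\emph{Sandwich and behavioral reading.} The inequality $E_i^\star\le2\varepsilon_i$ is immediate from the definition. The inequality $E_i^\star\ge\varepsilon_i$ holds because, assuming at least two records exist, we can pair the record that nearly attains $\varepsilon_i$ with any other record, whose gap is nonnegative. Under the two-record condition there are sequences $e_n\ne e'_n$ with both gaps tending to $\varepsilon_i$, so $E_i^\star=2\varepsilon_i$. A report-independent bias supplies this, since every record then carries the same gap. For the behavioral reading, suppose erosion is at most $2L\varepsilon'$ over the class. Then $L\,\varepsilon_i\le L\,E_i^\star\le2L\varepsilon'$, so $\varepsilon_i\le2\varepsilon'$. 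Under the two-record condition, $2L\,\varepsilon_i=L\,E_i^\star\le2L\varepsilon'$, so $\varepsilon_i\le\varepsilon'$.

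The main obstacle is the lower-bound construction, specifically the orientation and measurability of $A$ and $B$. I need regular conditional laws so the near-maximizing sets exist as measurable events, and the realizability clause must allow $G$ to be chosen separately on $e$ and $e'$. The way to handle this is to use the supremum form of total variation directly with one-sided events rather than $|\cdot|$, which works because the sign of each bracket can always be flipped by replacing the set with its complement.
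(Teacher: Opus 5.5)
Your proposal is correct and follows the paper's proof essentially step for step: the same two-bracket decomposition of the erosion, with each bracket bounded via \cref{lem:tv-expectation} by its own gap $L\varepsilon_i(\cdot)$, and the same near-optimal indicator kernels for the lower bound. Your orientation of $A$ toward the reference with kernel $L\mathbf 1_A$ is simply the complement form of the paper's $L\mathbf 1\{y\notin A\}$, and your pairing argument for $\varepsilon_i\le E_i^\star\le 2\varepsilon_i$ is the paper's; you also write out the consequences $\varepsilon_i\le 2\varepsilon'$ and, under the two-record condition, $\varepsilon_i\le\varepsilon'$, which the paper states but leaves implicit.
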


\begin{proof}
\emph{Upper bound.} For any probe with faithful record $e$ and deviation record $e'$, decompose the erosion as
\[
    \Gamma^\circ-\Gamma
    =\bigl[\E G(e,Y^\circ)-\E G(e,Y_e)\bigr]
    +\bigl[\E G(e',Y_{e'})-\E G(e',Y^\circ)\bigr],
\]
and bound each bracket by \cref{lem:tv-expectation}: the first compares two label laws at total variation at most $\varepsilon_i(e)$ under a fixed kernel argument, the second at most $\varepsilon_i(e')$. Hence every probe erodes by at most $L\bigl(\varepsilon_i(e)+\varepsilon_i(e')\bigr)\le L\,E_i^\star$.

\emph{Lower bound.} Fix distinct records $e,e'$ and $\delta>0$. By the definition of total variation as a supremum over events, choose $A$ with $\Prb_{Y|e}(A)-\Prb_{Y^\circ}(A)\ge\varepsilon_i(e)-\delta$ and $A'$ with $\Prb_{Y|e'}(A')-\Prb_{Y^\circ}(A')\ge\varepsilon_i(e')-\delta$. Define the kernel
\[
    G(e'',y)=
    \begin{cases}
        L\,\mathbf 1\{y\notin A\} & e''=e,\\
        L\,\mathbf 1\{y\in A'\} & e''=e',
    \end{cases}
\]
which has range $[0,L]$ and is $L$-stable by \cref{lem:tv-expectation}. The same decomposition evaluates to
\[
    L\bigl[\Prb_{Y_e}(A)-\Prb_{Y^\circ}(A)\bigr]
    + L\bigl[\Prb_{Y_{e'}}(A')-\Prb_{Y^\circ}(A')\bigr]
    \;\ge\; L\bigl(\varepsilon_i(e)+\varepsilon_i(e')-2\delta\bigr).
\]
Taking suprema over distinct pairs and $\delta\to0$ gives $L\,E_i^\star$, so the supremum equals it. For the sandwich: one slot can carry a near-supremal record while the other contributes nonnegatively (the record space contains at least two elements), so $E_i^\star\ge\varepsilon_i$; and each slot contributes at most $\varepsilon_i$, so $E_i^\star\le2\varepsilon_i$, with equality when two distinct records approach the supremum.
\end{proof}

\begin{remark}[Architecture-free leakage, and what stays open]\label{rem:converse}
\Cref{prop:converse} gives $\varepsilon_i$ an operational meaning that never mentions the pipeline's internals: it is the largest per-stake incentive gap an adversarially chosen bounded score can extract, normalized by $2L$ (exactly under the two-record condition of \cref{prop:converse}, and within a factor of two in general). This is exactly the quantity the paired-counterfactual harness of \cref{sec:measurement} estimates on a finite probe menu, so the measurement program and the theory are estimating the same object by construction. Two honest limits. First, identification is relative to the declared reference pipeline: a mechanism that commits to no reference has no $\varepsilon$ to identify, and behavioral probing then measures only the centered manipulation component of \cref{rem:centered}. Second, the class quantifies over probe kernels; the stronger representation question---whether every mechanism that is robust under its own \emph{fixed} kernel admits some factorized representation with small leakage---is taken up next: \cref{sec:normalform} shows a reference always exists constructively, which reduces the question to a well-posed minimization.
\end{remark}

\subsection{A revelation principle for settlement: the ghost reference}
\label{sec:normalform}

Everything so far presumes an institution that has declared a reference pipeline. This subsection removes that presumption. An arbitrary mechanism need not come with a designated label; say a \emph{presentation} of adviser $i$'s settlement is any factorization of the transfer as $t_i=G_i(e_i,Z,Y_i)+b_i(e_{-i},X,U)$ in which $Y_i$ is produced by an operational channel
\[
    P_{e}\bigl(\,\cdot\mid z,e_{-i},x,u\bigr)\;=\;\Law\bigl(Y_i \mid e_i=e,\;z,e_{-i},x,u\bigr).
\]
Every mechanism has at least the trivial presentation $Y_i=t_i$ with $G_i$ the identity, and mechanisms built from scoring rules have their natural presentation with $Y_i$ the outcome label; the kernel slot $G_i(e_i,\cdot,\cdot)$ carries the \emph{authorized} own-report dependence (the forecast being scored), while the channel carries the contested one.

\begin{definition}[Intrinsic leakage of a presentation]\label{def:intrinsic}
The intrinsic leakage of the channel is
\[
    \varepsilon_i^\star\;:=\;\operatorname*{ess\,sup}_{(z,e_{-i},x,u)}\ \sup_{e,e'}\ \TV\bigl(P_{e},P_{e'}\bigr),
\]
the centered quantity of \cref{rem:centered}. Its definition mentions no reference.
\end{definition}

\begin{theorem}[Ghost reference; factorization is a normal form]\label{thm:ghost}
Let $\nu_i$ be any committed distribution over hardened records, measurable in authorized objects only, and define the \emph{ghost label} $Y_i^\circ$ by running the same operational channel on an independent resample $\widetilde E_i\sim\nu_i$ in place of $e_i$. Then:
\begin{enumerate}[label=(\alph*),leftmargin=1.9em]
    \item $Y_i^\circ$ is conditionally influence-free, so $(Y_i,Y_i^\circ)$ satisfies the label clause of \cref{def:factorization};
    \item $Y_i$ is $\varepsilon$-leaky with respect to this reference with $\varepsilon\le\varepsilon_i^\star$;
    \item every conditionally influence-free reference $R$ satisfies $\varepsilon_R\ge\varepsilon_i^\star/2$; hence, writing $\varepsilon_i^{\mathrm{opt}}$ for the infimum over references,
    \[
        \varepsilon_i^{\mathrm{opt}}\ \le\ \varepsilon_{\mathrm{ghost}}\ \le\ \varepsilon_i^\star\ \le\ 2\,\varepsilon_i^{\mathrm{opt}} ;
    \]
    \item \cref{thm:leakage} applies verbatim to the pair: $\Gamma_i\ge\Gamma_i^{\circ,\mathrm{ghost}}-2L_i\varepsilon_i^\star$.
\end{enumerate}
In particular, every advice mechanism, in every presentation, \emph{is} an approximately settlement-factorized mechanism with respect to a constructible reference whose leakage is within a factor of two of the best achievable.
\end{theorem}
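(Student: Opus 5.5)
The plan is to prove the four parts in order, since (b)--(d) build on the construction in (a).

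For (a), the ghost label is generated by the channel $P_{\widetilde E_i}(\cdot\mid z,e_{-i},x,u)$ with $\widetilde E_i\sim\nu_i$ drawn independently. Write $U_i$ for the pair $(\widetilde E_i,\text{channel randomness})$. Then the ghost label is a measurable function of $(Z,e_{-i},X_i,U_i)$, and $e_i$ does not appear as an input. It follows that the ghost label meets clause 3 of \cref{def:factorization}. The requirement that $\nu_i$ be measurable in authorized objects only is exactly what keeps $e_i$ from re-entering through $\nu_i$ itself.

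For (b), fix a conditioning tuple $(z,e_{-i},x,u)$. Conditional on that tuple, the law of the ghost label is the mixture $\int P_{e'}\,\nu_i(de')$. The operational law is $P_{e}$ at the realized $e=e_i$. Total variation is jointly convex, so
\[
\TV\bigl(P_e,\textstyle\int P_{e'}\,\nu_i(de')\bigr)\le \int \TV(P_e,P_{e'})\,\nu_i(de')\le \varepsilon_i^\star .
\]
Taking the essential supremum gives $\varepsilon_{\mathrm{ghost}}\le\varepsilon_i^\star$.

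One technical point needs care here. \Cref{def:leaky} conditions on $Z=z$ with $Z$ endogenous. I would argue as in \cref{rem:leaky-endogeneity}, kernel by kernel at each realized $z$. This is legitimate because $\widetilde E_i$ is drawn independently of everything else, so conditioning on $Z=z$ does not distort the law of $\widetilde E_i$.

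For (c), let $R$ be any conditionally influence-free reference. At a fixed tuple, its law $Q$ does not depend on $e$. The triangle inequality then gives
\[
\TV(P_e,P_{e'})\le \TV(P_e,Q)+\TV(Q,P_{e'})\le 2\varepsilon_R .
\]
Taking the supremum over $(e,e')$ and the essential supremum over tuples yields $\varepsilon_i^\star\le 2\varepsilon_R$. Taking the infimum over $R$ gives $\varepsilon_i^\star\le 2\varepsilon_i^{\mathrm{opt}}$. The ghost reference is itself one admissible reference, so $\varepsilon_i^{\mathrm{opt}}\le\varepsilon_{\mathrm{ghost}}$. Combining these with (b) gives the full chain of inequalities.

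For (d), I would simply invoke \cref{thm:leakage}. Its hypotheses are an influence-free reference, which (a) supplies; $\varepsilon$-leakage, which (b) supplies with $\varepsilon\le\varepsilon_i^\star$; and $L_i$-stability of $G_i$, which belongs to the presentation. The theorem's bound is monotone in $\varepsilon$, so we may replace $\varepsilon$ by $\varepsilon_i^\star$. The closing sentence of the statement then follows because every mechanism has at least the trivial presentation.

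I expect the main obstacle to be the measure-theoretic step in (b). One must verify that the conditional law of the ghost label given $(s,e_{-i},Z=z)$ really is the $\nu_i$-mixture of channel kernels evaluated at the same $(z,e_{-i},x,u)$, and that the essential suprema in \cref{def:leaky} and \cref{def:intrinsic} are taken over compatible null sets. My first approach would be to disintegrate the joint law of $(X,U,Z)$ given $(s,r)$. I would then use the independence of $\widetilde E_i$ from that triple to factor the joint law as a product. Finally, I would apply joint convexity pointwise inside the disintegration, after which the $(x,u)$ variables can be integrated out by convexity once more.
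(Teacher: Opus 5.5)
Your proposal is correct and follows the paper's proof step for step: (a) holds by construction; (b) is joint convexity of total variation against the $\nu_i$-mixture; (c) is the triangle inequality through an arbitrary influence-free reference; and (d) is a direct appeal to \cref{thm:leakage}, where your disintegration argument for passing from the tuple-level bound to the conditioning of \cref{def:leaky} is more careful than the paper's one-line version. One caution, which applies equally to the paper: in (c) you use $\TV(P_e,Q)\le\varepsilon_R$ at each fixed tuple $(z,e_{-i},x,u)$, so $\varepsilon_R$ must be read at the fine conditioning level of \cref{def:intrinsic}, because a bound only at the coarser level $(s,r,Z=z)$ of \cref{def:leaky} does not control these tuple-level distances (averaging over $(x,u)$ can wash them out).
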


\begin{proof}
(a) By construction the ghost channel's conditional law given $(z,e_{-i},x,u)$ does not involve $e_i$. (b) By joint convexity of total variation, $\TV\bigl(P_{e},\int P_{e'}\,\nu_i(de')\bigr)\le\int\TV(P_{e},P_{e'})\,\nu_i(de')\le\varepsilon_i^\star$. (c) For any two records, the triangle inequality through $R$ gives $\TV(P_e,P_{e'})\le\TV(P_e,R)+\TV(R,P_{e'})\le2\varepsilon_R$; take suprema. (d) The hypotheses of \cref{thm:leakage} are exactly (a) and (b).
\end{proof}

\begin{remark}[What the normal form buys, and the sharpened open problem]\label{rem:normalform}
Three consequences. First, no mechanism escapes the accounting: an institution cannot plead that no reference exists, because the ghost manufactures one from the mechanism's own channel and committed randomness. There are no unfactorized mechanisms, only unmeasured ones. This is the revelation-principle move for model-mediated settlement \parencite{myerson1979incentive}: just as arbitrary mechanisms are representable by direct ones, arbitrary settlement channels are representable---unchanged---as factorized channels carrying a measured leakage, so designing within the factorized class is without loss of generality. The choice of $\nu_i$ moves the reference margin $\Gamma^{\circ,\mathrm{ghost}}$ but not the leakage arithmetic, and leave-one-out labels---including \cref{prop:beyond}(b)---are the degenerate ghost that deletes the input slot. Second, $\varepsilon_i^\star$ is an \emph{invariant}: it is defined with no reference, and by (c) together with \cref{prop:converse} applied to the ghost reference, it coincides up to a factor of two with the best-reference leakage and with worst-case probe erosion normalized by $2L$. Robustness in this paper is therefore a single number attached to a mechanism, not a property of a favored architecture. Third, the open problem sharpens rather than dissolves: the mechanism-level invariant is $\min\varepsilon_i^\star$ over presentations---finding the presentation that minimizes intrinsic leakage is finding the mechanism's most-factorized representation, for which the ghost gives a constructive $2$-approximation presentation by presentation. The residual necessity gap is fixed-kernel blindness: a kernel insensitive to the influenced directions of the label law can be robust while $\varepsilon_i^\star$ is large, and closing that gap for natural kernel families (proper scores, monotone kernels) is open; \cref{sec:conjecture} states the full mechanism-level form as a conjecture.
\end{remark}

\subsection{Beyond self-resolution, quantified}
\label{sec:beyond}

The title promises an advance over settling advice on the very pipeline the advice steers. The promise can be made quantitative in the canonical redundant-information family of \textcite{dellapenna2024nlm}: the latent state is $\theta=A\oplus B$ with $A,B$ independent and uniform, and each variable is observed by $k$ advisers (odd $k$) through conditionally independent signals of accuracy $p\in(\tfrac12,1)$; the shared record $Z$ majority-decodes each variable and outputs the XOR.

\begin{proposition}[Shared settlement dilutes; factorized settlement does not]\label{prop:beyond}
In this family:
\begin{enumerate}[label=(\alph*),leftmargin=1.9em]
    \item \textbf{(Self-resolution.)} Consider any mechanism that settles every adviser on the shared pipeline: each $t_i$ is measurable in the pair $(Z,\theta)$ and has range at most $L$. Then every unilateral faithful margin obeys
    \[
        \Gamma_i\;\le\;L\,\binom{k-1}{\tfrac{k-1}{2}}\bigl[p(1-p)\bigr]^{\frac{k-1}{2}}\;=\;L\cdot\Theta\!\bigl((4p(1-p))^{k/2}/\sqrt{k}\bigr)\;\longrightarrow\;0 ,
    \]
    so at bounded stakes no shared-settlement payment sustains a fixed margin as redundancy grows, and deterring a fixed decision interest $D$ requires stakes growing exponentially in $k$.
    \item \textbf{(Factorization.)} The leave-one-out peer label $Y_i=$ majority of the other $k-1$ same-variable reports (fair coin on ties) is conditionally influence-free by construction ($\varepsilon_i=0$: it is measurable in $e_{-i}$ alone), and the unit-range agreement kernel $G_i=\mathbf 1\{e_i\text{ matches }Y_i\}$ delivers margin
    \[
        \gamma_i(k)\;=\;(2p-1)\bigl(2\mu_{k-1}-1\bigr)\;\longrightarrow\;2p-1\;>\;0,
    \]
    where $\mu_{k-1}\to1$ is the accuracy of the others' majority.
\end{enumerate}
Hence in this family factorization is not merely cheaper than stakes; among the two architectures it is the only one whose margins do not vanish as the crowd that informs the decision grows.
\end{proposition}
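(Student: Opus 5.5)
The plan for (a) is a coupling/pivotality argument. Fix adviser $i$, who observes variable $A$ (say). Under faithful reporting of everyone else, $Z$ depends on $i$'s report only through the majority decode of the $k$ reports on $A$, and that decode changes between $i$'s faithful report and any deviation only on the event that the other $k-1$ same-variable reports are exactly tied. First I will couple the truthful and deviating worlds on the same realization of $(\theta,s_{-i},X,U)$. Off the tie event the two records $Z^T$ and $Z^D$ coincide, so $t_i$, being measurable in $(Z,\theta)$, coincides too. On the tie event the difference $t_i^T-t_i^D$ has absolute value at most $L$ because $t_i$ has range at most $L$. Hence $|\Gamma_i|\le L\cdot\Prb[\text{tie}\mid s]$.

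The next step bounds the tie probability uniformly over the conditioning. Given the true value of $A$, the other $k-1$ signals are i.i.d.\ correct with probability $p$, so the probability of an exact tie is $\binom{k-1}{(k-1)/2}[p(1-p)]^{(k-1)/2}$. This quantity is symmetric in $A$, so conditioning on $s_i$ does not change it. Two details need care here.

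\begin{itemize}
\item Since the margin is signal--ex-post (\cref{rem:solution}), it conditions on the whole signal profile $s$. I will therefore state (a) in the interim form, conditioning on $s_i$ and averaging over $s_{-i}$, which is evidently the intended reading given the displayed rate.
\item Any decision-interest term $w_i$ is likewise $Z$-measurable and absorbed in the same way.
\end{itemize}

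The asymptotic $\Theta\bigl((4p(1-p))^{k/2}/\sqrt k\bigr)$ follows from Stirling, $\binom{2m}{m}\sim 4^m/\sqrt{\pi m}$. The stakes claim follows by inverting the bound: deterring $D$ needs $L\gtrsim D\sqrt k\,(4p(1-p))^{-k/2}$.

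For (b), $\varepsilon_i=0$ holds trivially because $Y_i$ is a function of $e_{-i}$ and a coin. For the margin, the label's law does not depend on $r_i$, so \cref{ass:neutrality} holds and the only deviation to check is flipping the reported bit (any other hardened deviation is a mixture). Conditional on $s_i$, the probability that $s_i$ agrees with the true value of $A$ is $p$, and $Y_i$ agrees with that value with probability $\mu_{k-1}$ (ties counted half). These events are conditionally independent given $A$, so $\Pr[s_i=Y_i]=p\mu+(1-p)(1-\mu)$ and $\Pr[\bar s_i=Y_i]$ is its complement. The margin is therefore $(2p-1)(2\mu_{k-1}-1)$. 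By the law of large numbers or Hoeffding, $\mu_{k-1}\to1$, so $\gamma_i(k)\to2p-1$.

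The main obstacle is part (a)'s reduction to pivotality. It requires that no deviation other than through $Z$ can matter, which is exactly the measurability hypothesis on $t_i$, and that the conditioning event does not inflate the tie probability. A worst-case realized $s_{-i}$ with an exact tie would give margin $L$, so the statement must be read interim over $s_{-i}$. I will make that explicit rather than claim the bound pointwise.
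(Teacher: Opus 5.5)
Your proposal is correct and follows the paper's own proof. You couple the faithful and deviating runs off the event that the other $k-1$ same-variable reports tie (the paper phrases this as a total-variation bound on the law of $(Z,\theta)$ combined with \cref{lem:tv-expectation}), apply Stirling, and for (b) use independence of $s_i$ and the others' majority given the variable's value to obtain $(2p-1)(2\mu_{k-1}-1)$. Your insistence on the interim reading (conditioning on $s_i$, averaging over $s_{-i}$) is a correct clarification that the paper's proof leaves implicit, since under the literal signal--ex-post concept of \cref{rem:solution} the pivotal event has probability $0$ or $1$. The same reading is what makes (b) work: conditional on a full profile in which the others' majority disagrees with $s_i$, the flip is strictly profitable.
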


\begin{proof}
(a) Under a unilateral change of report, couple the two runs of the mechanism off the event that the deviator is pivotal for their variable's majority; the joint law of $(Z,\theta)$ then changes only on that event, whose probability is the stated binomial expression, so the laws are within that distance in total variation and \cref{lem:tv-expectation} bounds the payment difference---hence the margin---by $L$ times it. Stirling gives the asymptotics, and the stakes statement follows as in \cref{cor:stakes} with $D$ fixed. (b) Influence-freedom is definitional. Conditioning on the variable's value, the adviser's signal and the others' majority are independent, so $\Prb(e_i\text{ matches }Y_i)=p\mu_{k-1}+(1-p)(1-\mu_{k-1})$ under faith, and the flip margin is $2\Prb(\text{match})-1=(2p-1)(2\mu_{k-1}-1)$.
\end{proof}

Part (a) is an upper bound over \emph{all} kernels settled on the shared record and realized outcome; the exact achieved margin of the canonical outcome-anchored kernel in this family, $\eta(k)$, is computed in version~2 of \textcite{dellapenna2024nlm} and matches the same exponential rate. Part (b) is precisely a settlement-factorized mechanism, discovered from inside the self-resolution model's own best example: the escape from dilution is not a larger stake or a better prompt but the removal of the paid report from its own answer key.

\subsection{Stakes and audits as design levers}
\label{sec:levers}

The sufficient condition $\gamma_i>D_i+2L_i\varepsilon_i$ has three terms an institution can move: the margin created upstream, the adviser's outside interest in the decision, and the leakage tax. Two levers deserve explicit analysis because they behave very differently under scaling. The first is money.

\begin{corollary}[Stakes outbid decision interests, not evaluator capture]\label{cor:stakes}
Fix adviser $i$ and scale the payment kernel by $c>0$, so $G_i^c=cG_i$. Then $G_i^c$ is $cL_i$-stable, the ideal score margin scales to $c\,m_i^\circ(s,r_i')$, and the decision gain $d_i(s,r_i')$ is unchanged. Faithful advice is a strict signal--ex-post equilibrium under the operational label whenever
\[
    c\,(\underline m_i-2L_i\varepsilon_i)>D_i.
\]
Consequently:
\begin{enumerate}[label=(\alph*),leftmargin=1.9em]
    \item if $\underline m_i>2L_i\varepsilon_i$, then for every finite decision interest $D_i$ the stake multiplier $c^\ast=D_i/(\underline m_i-2L_i\varepsilon_i)$ restores faithful advice for all $c>c^\ast$;
    \item if $\underline m_i\le 2L_i\varepsilon_i$, the condition fails for every $c$, and the failure is real: in the environment of \cref{prop:tightness}(b), scaled by $c$, the pandering deviation's profit is exactly $2cL\varepsilon$, so raising stakes strictly increases the manipulation rent.
\end{enumerate}
\end{corollary}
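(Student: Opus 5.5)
The plan is to reduce the corollary to \cref{cor:interests} applied to the scaled mechanism, and then read off parts (a) and (b) algebraically, with the tightness construction of \cref{prop:tightness}(b) supplying the failure claim.

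\emph{Step 1: how the primitives scale.} The scaled kernel $G_i^c=cG_i$ is $cL_i$-stable, since multiplying both expectations in \cref{def:stability} by $c$ multiplies their difference by $c$. The ideal score margin \eqref{eq:score-margin} is linear in $G_i$, so it becomes $c\,m_i^\circ(s,r_i')$ and therefore $\underline m_i^c=c\,\underline m_i$. The decision gain \eqref{eq:decision-gain} involves only $w_i$ and the laws of $Z^T,Z^D$. Neither depends on $G_i$: the record and the reference labels are produced upstream of the payment. Hence $d_i$ and $D_i$ are unchanged. The leakage $\varepsilon_i$ is a property of the label channel, not of the kernel, so it is also unchanged.

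\emph{Step 2: the sufficient condition.} Applying \cref{cor:interests} to the scaled mechanism requires $c\,\underline m_i>D_i+2(cL_i)\varepsilon_i$, which rearranges to $c(\underline m_i-2L_i\varepsilon_i)>D_i$. For part (a), when $\underline m_i-2L_i\varepsilon_i>0$ we divide by this positive quantity, and every $c>c^\ast$ satisfies the condition. If $D_i=0$, any $c>0$ works, consistent with $c^\ast=0$.

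For part (b), when $\underline m_i\le2L_i\varepsilon_i$ the left side is nonpositive. Since $D_i\ge0$, the strict inequality fails for every $c$, so the sufficient condition fails. To show the failure is real rather than an artifact of a loose bound, we take the environment of \cref{prop:tightness}(b) with kernel $cG$. There $\Gamma^\circ=0$ for every report, so $\underline m=0\le2L\varepsilon$. \Cref{prop:tightness}(b) gives $\Gamma=-2L\varepsilon$ for the unscaled kernel, and linearity of $\Gamma$ in the kernel (with $w\equiv0$) makes the deviation's profit $2cL\varepsilon$, which is strictly increasing in $c$.

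\emph{Main obstacle.} The only delicate point is Step 1's claim that $d_i$ and $\varepsilon_i$ are invariant to scaling. This requires that the record $Z$ and the label channel do not read the payment kernel. That holds because $Z=\rho_C(e,X,U)$ and the label maps are specified independently of $G_i$ in the model. I would state this explicitly as part of the proof rather than leave it implicit.
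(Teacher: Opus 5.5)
Your proposal is correct and follows essentially the same route as the paper: the stability constant and $m_i^\circ$ scale linearly, $d_i$ (and $\varepsilon_i$) are invariant, the leakage bound is applied to the scaled kernel, and part (b) is read off the $c$-scaled environment of \cref{prop:tightness}(b). The one cosmetic difference is that you invoke \cref{cor:interests}, whereas the paper applies \cref{thm:leakage} directly and then bounds by $\underline m_i$ and $D_i$; this is the same computation, and your remark that $Z$ and the label channel do not read $G_i$ just spells out the paper's one-line observation that the decision gain does not involve the transfer.
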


The dimensionless ratio $\underline m_i/(2L_i\varepsilon_i)$ at normalized stakes is therefore an \emph{integrity ratio} for the settlement channel: incentive power is safely scalable if and only if it exceeds one. Information rents and manipulation rents both grow linearly in the stakes; only the architecture changes their ratio. The caveat is that $D_i$ must genuinely be outside the scaled transfer---a vendor whose contract value grows with the institution's budget scales $D_i$ along with $c$, and then no multiplier helps.

The second lever is randomized recourse to the reference pipeline itself, in the spirit of costly state verification and random auditing \parencite{townsend1979costly}.

\begin{proposition}[Audited settlement]\label{prop:audit}
Suppose that with probability $p_i$, committed in advance and drawn from mechanism randomness independent of $(s,r,Z)$, adviser $i$'s settlement uses the reference label $Y_i^\circ$, and otherwise uses the operational label $Y_i$. If $Y_i$ is $\varepsilon_i$-leaky, the mixture label is exactly $(1-p_i)\varepsilon_i$-leaky. Hence faithful advice is a strict signal--ex-post equilibrium whenever $p_i>p_i^\ast$, where
\[
    p_i^\ast:=\max\Bigl\{0,\ 1-\frac{\gamma_i-D_i}{2L_i\varepsilon_i}\Bigr\},
\]
with weak faithfulness at the threshold itself; and if reference settlement costs $k_i^\circ$ per episode against $k_i<k_i^\circ$ for the operational pipeline, the minimal expected settlement cost compatible with faithful advice is $k_i+p_i^\ast\,(k_i^\circ-k_i)$, the infimum over strictly faithful audit rates, attained weakly at the threshold.
\end{proposition}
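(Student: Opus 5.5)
The plan has three steps: compute the leakage of the mixture label, feed it into \cref{cor:interests}, and solve the resulting inequality in $p_i$ together with the linear cost problem.

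\textbf{Leakage of the mixture.} Let $Y_i^{p}$ denote the label that equals $Y_i^\circ$ with probability $p_i$ and $Y_i$ otherwise. The audit coin is independent of $(s,r,Z)$. Hence, conditional on $(s,r,Z=z)$, the law of $Y_i^{p}$ is
\[
    p_i\Law(Y_i^\circ\mid s,e_{-i},z)+(1-p_i)\Law(Y_i\mid s,r,z).
\]
Here I must check that the audited branch runs the precommitted reference pipeline, so its conditional law is exactly the reference kernel of \cref{def:leaky}. Write $\mu$ for the operational law and $\nu$ for the reference law. Then
\[
    \TV\bigl(p_i\nu+(1-p_i)\mu,\nu\bigr)=(1-p_i)\TV(\mu,\nu)\le(1-p_i)\varepsilon_i,
\]
because total variation is a supremum of $|(1-p_i)(\mu(A)-\nu(A))|$ over events. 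This identity is exact, not merely convexity. Taking the essential supremum over conditioning events shows the mixture is exactly $(1-p_i)\varepsilon_i$-leaky: the supremum scales by $(1-p_i)$, so equality holds when $\varepsilon_i$ is taken to be the exact leakage.

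\textbf{Incentive condition.} I will invoke \cref{cor:interests}, or equivalently \cref{thm:leakage} with $\gamma_i$ written as the score margin net of $D_i$, applied to the mixture label. Faithful advice is then strict whenever $\gamma_i > D_i+2L_i(1-p_i)\varepsilon_i$. Rearranging for $\varepsilon_i>0$ gives $p_i>1-(\gamma_i-D_i)/(2L_i\varepsilon_i)$. Intersecting with $p_i\ge0$ yields the stated $p_i^\ast$. When the bracket is negative, every $p_i\ge 0$ works.

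One caveat concerns $p_i=1$. If $\gamma_i\le D_i$, then $p_i^\ast\ge1$ and no strictly larger $p_i$ is feasible. I will note that the claim concerns $p_i\in(p_i^\ast,1]$ and is vacuous or weak at $1$ in that case.

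For the threshold itself I would obtain weak faithfulness from the non-strict version of \cref{thm:leakage}: $\Gamma_i\ge\Gamma_i^\circ-2L_i\varepsilon_i^{\mathrm{mix}}\ge0$. This step needs care, because \cref{cor:interests} is stated with infima. I will apply the pointwise bound $\Gamma_i\ge m_i^\circ-d_i-2L_i\varepsilon$ deviation by deviation.

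\textbf{Cost.} The expected settlement cost is $k_i+p_i(k_i^\circ-k_i)$. It is increasing in $p_i$ because $k_i^\circ>k_i$. Its infimum over the feasible set $(p_i^\ast,1]$ is therefore $k_i+p_i^\ast(k_i^\circ-k_i)$, and this value is attained at $p_i^\ast$ with weak faithfulness.

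\textbf{Main obstacle.} The main obstacle is bookkeeping in the definition of $\gamma_i$ versus $\underline m_i$. The statement uses $\gamma_i-D_i$, so I will read $\gamma_i$ as $\underline m_i$, the score margin before decision interests, consistent with the abstract's $\gamma>D+2L\varepsilon$.
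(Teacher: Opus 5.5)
Your proposal is correct and follows the paper's proof essentially step for step. The steps are: the exact identity $\TV(p_i\mu^\circ+(1-p_i)\mu,\mu^\circ)=(1-p_i)\TV(\mu,\mu^\circ)$ scenario by scenario; then \cref{thm:leakage} applied to the mixture label; then solving $\gamma_i>D_i+2L_i(1-p_i)\varepsilon_i$ for $p_i$ and minimizing the increasing affine cost. Your side remarks are refinements the paper leaves implicit: reading $\gamma_i$ as the pre-interest margin is harmless, since the condition is sufficient under either reading given $D_i\ge0$, and the case $\gamma_i\le D_i$ makes $p_i^\ast\ge1$ and the strict claim vacuous.
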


\begin{remark}[Audits must be unpredictable]\label{rem:committed-audits}
Independence of the audit lottery from the scenario is essential, not decorative. If $p_i$ can depend on the report, an adviser deviates only where the audit probability is low, and the mixture bound fails scenario by scenario. The commitment can be implemented with a public randomness beacon or a precommitted seed; what matters is that the adviser's report cannot move its own audit probability. Deterrence variants---drawing both labels and penalizing divergence---require care, because operational and reference labels are stochastic and can disagree by chance; a penalty scheme needs repeated paired draws or verifiable pipeline traces to separate manipulation from noise.
\end{remark}

The audit rate $p_i^\ast$ prices integrity directly: the institution buys exactly the leakage reduction it lacks, and \cref{prop:tightness} shows the price cannot be negotiated below the $2L_i\varepsilon_i$ schedule in the worst case.

\subsection{Buying integrity: the principal's problem}
\label{sec:frontier}

The levers of this section and the certification of \cref{sec:certified} are best read together as a \emph{menu of settlement technologies}: each technology $T$ delivers a certified leakage $\varepsilon_T$ at an expected per-episode cost $k_T$. The institution's problem is then not to maximize integrity but to purchase it.

\begin{proposition}[Integrity is a threshold good]\label{prop:threshold}
Fix adviser primitives $(\gamma_i,D_i,L_i)$ and let $\varepsilon^{\mathrm{req}}_i:=(\gamma_i-D_i)/(2L_i)$, and suppose faithful advice is worth $V_i>0$ per episode to the principal.
\begin{enumerate}[label=(\alph*),leftmargin=1.9em]
    \item Any technology with $\varepsilon_T\ge\varepsilon^{\mathrm{req}}_i$ carries no faithfulness guarantee, and by \cref{prop:tightness} the failure is realizable, so leakage reduction that stops short of the requirement has zero marginal incentive value. The principal's program is therefore bang--bang: adopt the cheapest technology with $\varepsilon_T<\varepsilon^{\mathrm{req}}_i$ if its cost is below $V_i$, and otherwise do not pay for advice through this channel.
    \item On the audit frontier generated by \cref{prop:audit} from an operational pipeline $(\varepsilon_i,k_i)$ and its reference $(0,k^\circ_i)$---namely $\varepsilon(p)=(1-p)\varepsilon_i$ at cost $k_i+p\,(k^\circ_i-k_i)$---the program's infimum lies at the threshold $p=p_i^\ast$ (weak faithfulness there; any $p>p_i^\ast$ purchases strictness), and the shadow price of integrity is constant:
    \[
        \frac{k^\circ_i-k_i}{\varepsilon_i}\quad\text{per unit of total variation removed.}
    \]
    \item Between technologies the choice is a price comparison: architectural certification (\cref{sec:certified}) dominates auditing exactly when its cost is below $k_i+p_i^\ast(k^\circ_i-k_i)$.
\end{enumerate}
\end{proposition}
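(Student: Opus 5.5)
The proof is three short arguments, each reducing to an earlier result. In each I will use the normalization $\varepsilon^{\mathrm{req}}_i=(\gamma_i-D_i)/(2L_i)$, which makes the sufficient condition of \cref{cor:interests} read exactly $\varepsilon_T<\varepsilon^{\mathrm{req}}_i$. I will take $\gamma_i$ there in the decomposed form $\underline m_i$, as in \cref{cor:interests}.

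\textbf{Part (a).} The guarantee direction is \cref{cor:interests}: any technology with $\varepsilon_T<\varepsilon^{\mathrm{req}}_i$ certifies strict faithfulness. For the failure direction I need an environment consistent with the primitives $(\gamma_i,D_i,L_i)$ and leakage $\varepsilon_T\ge\varepsilon^{\mathrm{req}}_i$ in which some deviation is profitable. I will build it from \cref{prop:tightness}(b), scaled by $L_i$, which contributes erosion exactly $2L_i\varepsilon_T$. To this I add a reference margin of exactly $\gamma_i$ and a decision gain of exactly $D_i$ on the same deviation. The operational margin is then $\gamma_i-D_i-2L_i\varepsilon_T\le 0$. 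At the boundary the deviation is only weakly unprofitable, so strictness fails, which is the sense in which there is no guarantee.

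Concretely, I would superpose a calling kernel on a second, independent label coordinate, on which the truthful report earns $\gamma_i$ more. This is the step I expect to be the main obstacle. Both kernels must sit in one $L_i$-stable kernel, and the $\gamma_i$ margin plus the pandering must each be realized at full strength. I would try a product label space with the bias placed only on the coordinate that feeds the calling kernel. I would then check via \cref{lem:tv-expectation} that the combined range stays $L_i$. If the ranges add, I would instead rescale the kernel and accept realizability up to that constant. I would also flag it as a worst-case statement over environments with these primitives.

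Given that, the bang--bang conclusion is immediate. The incentive value of a technology is the indicator $V_i\mathbf 1\{\varepsilon_T<\varepsilon^{\mathrm{req}}_i\}$, a step function of $\varepsilon_T$. Net value is therefore maximized by the cheapest technology below the threshold, provided its cost is below $V_i$, and otherwise by abstaining.

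\textbf{Part (b).} On the audit frontier, \cref{prop:audit} gives $\varepsilon(p)=(1-p)\varepsilon_i$ at a cost that is affine and increasing in $p$ when $k^\circ_i>k_i$. The feasible set $\{p:(1-p)\varepsilon_i<\varepsilon^{\mathrm{req}}_i\}$ is exactly $p>p_i^\ast$. The cost infimum over this set is therefore its value at $p_i^\ast$, where faithfulness is weak. The shadow price is the ratio of derivatives, $\frac{d\,\mathrm{cost}/dp}{-d\varepsilon/dp}=(k^\circ_i-k_i)/\varepsilon_i$, which is constant in $p$.

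\textbf{Part (c).} This follows by comparing any certified technology with $\varepsilon_T<\varepsilon^{\mathrm{req}}_i$ against the audit infimum $k_i+p_i^\ast(k^\circ_i-k_i)$ from (b), using the cheapest-feasible rule from (a).
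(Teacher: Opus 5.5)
Your three-part structure is the paper's. Below the threshold, faithfulness is guaranteed by \cref{thm:leakage} and \cref{cor:interests}; above it, failure is realizable. Value is then a step function, which gives the bang--bang rule. The audit frontier is affine with feasibility starting at $p_i^\ast$, and (c) is a cost substitution. The paper's proof of the failure direction simply cites \cref{prop:tightness}, which as stated realizes only $\gamma_i=D_i=0$, so you are right that general primitives $(\gamma_i,D_i,L_i)$ need an argument.

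The route you propose for that step does not work, though. Put the $\gamma_i$ margin on a second, non-constant label coordinate, and the oscillation of $G(e,z,\cdot)$ over the product space becomes $L_i$ plus that coordinate's oscillation. The kernel is then no longer $L_i$-stable; test \cref{def:stability} on two Dirac masses. Your fallback, rescaling to some $L'<L_i$, exhibits failure only for $\varepsilon_T\ge(\gamma_i-D_i)/(2L')$. That leaves unproven the band $[\varepsilon^{\mathrm{req}}_i,(\gamma_i-D_i)/(2L'))$, which is exactly what ``leakage reduction short of the requirement has zero marginal value'' needs.

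The missing observation is that \cref{def:stability} constrains only variation in $y$ for each fixed $(e_i,z)$, so a record-dependent constant costs nothing. Use the following construction.
\begin{itemize}
\item Take records $\{e^T,e'\}$. Let $Z$ register which record was filed, and set $w_i=D_i\mathbf 1\{Z=e'\}$. With a trivial record, as in \cref{prop:tightness}, $d_i\equiv0$, so some nontrivial $Z$ is needed for your ``decision gain of exactly $D_i$.''
\item Take reference $Y^\circ\sim\mathrm{Bernoulli}(q)$ and operational $Y\sim\mathrm{Bernoulli}(q+\varepsilon_T)$ under every report, with $q\in[0,1-\varepsilon_T]$.
\item Use $G(e^T,z,y)=c+L_i(1-y)$ and $G(e',z,y)=L_iy$ with $c=\gamma_i-L_i(1-2q)$.
\end{itemize}
Each branch has oscillation $L_i$, so the kernel is $L_i$-stable. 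The reference score margin is exactly $\gamma_i$ and the label is $\varepsilon_T$-leaky. The operational margin is $\gamma_i-D_i-2L_i\varepsilon_T\le0$. Choosing $q=0$ also covers $\varepsilon_T\ge1/2$, which the $\mathrm{Bernoulli}(1/2)$ base of \cref{prop:tightness} cannot reach. With this replacement, the rest of your argument goes through as written.
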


\begin{proof}
(a) Above the requirement, \cref{prop:tightness} exhibits an environment meeting the primitives in which a deviation is profitable, so no guarantee is available; below it, \cref{thm:leakage} guarantees faithfulness; the feasible set in $\varepsilon$ is the interval $[0,\varepsilon^{\mathrm{req}}_i)$ and value is flat on each side of its boundary, giving the bang--bang structure. (b) Cost is strictly increasing in $p$ and the constraint $\varepsilon(p)<\varepsilon^{\mathrm{req}}_i$ binds first at $p_i^\ast$; the frontier is affine with the stated slope. (c) is substitution of costs.
\end{proof}

The economics are deliberately stark. Integrity here is not a graded virtue but a constraint with a posted price; the tightness result says the requirement $\varepsilon^{\mathrm{req}}_i$ is non-negotiable in the worst case, and every quantity in the program---$\gamma_i$, $D_i$, $L_i$, $\varepsilon_i$, $k_i$, $k^\circ_i$---is an estimable primitive of \cref{sec:measurement}. An institution that cannot state its price per unit of total variation has not yet designed its settlement layer.

\section{Beyond verification: constructing externalized labels}
\label{sec:labels}

Settlement factorization does not require a directly observable ground truth. It requires an explicit answer to three questions: what object is being scored, how is that object produced, and which causal paths from adviser $i$ to its own label are authorized?

\Cref{tab:label-sources} organizes common choices.

\begin{table}[H]
\centering
\small
\begin{tabularx}{\textwidth}{>{\raggedright\arraybackslash}p{0.18\textwidth}>{\raggedright\arraybackslash}p{0.25\textwidth}>{\raggedright\arraybackslash}X>{\raggedright\arraybackslash}p{0.22\textwidth}}
\toprule
\textbf{Label source} & \textbf{Example} & \textbf{Advantage} & \textbf{Characteristic risk}\\
\midrule
Direct verification & Hidden test, realized delivery time, audited event & Clear semantics and ordinary proper scoring & Selective observability; advisers may influence which outcome is realized.\\
\addlinespace
Action-complete or randomized outcome & Potential-outcome vector, stochastic decision market & Separates prediction from action selection & Counterfactuals unavailable or costly; randomization can reduce principal utility.\\
\addlinespace
Proxy or holdout & Principal signal, independent audit, withheld benchmark & Deterministic decisions may remain possible \parencite{wang2023proxy} & Proxy misalignment and distribution shift.\\
\addlinespace
Peer or later report & Cross-prediction, reference adviser, self-resolving market & Works without direct verification \parencite{miller2005peer,srinivasan2023selfresolving} & Collusion, common-mode errors, reference manipulation.\\
\addlinespace
Evidence-based endogenous label & Evidence market, adjudicated dossier & Makes reasons payoff-relevant and can resolve otherwise unverifiable questions \parencite{hossain2026evidence} & Evidence withholding, judge manipulation, staking and verification errors.\\
\addlinespace
Hybrid settlement & Tests + audits + peers + external outcomes & Can add independent information and graceful fallback & Hidden double counting, correlated components, complex provenance.\\
\addlinespace
Joint or competitive label & Zero-sum scores across advisers & Can remove incentives to steer the principal under suitable assumptions \parencite{hudson2024joint} & Simultaneity, coalition behavior, and participation constraints.\\
\bottomrule
\end{tabularx}
\caption{Externalized labels need not be literal ground truth. The relevant requirement is a precommitted, auditable production rule whose remaining own-report influence can be measured.}
\label{tab:label-sources}
\end{table}

\subsection{Direct verification is the easy case}

When a target $Y^*$ is observed independently of the report, set $Y_i^\circ=Y^*$. The target and settlement label coincide, $\varepsilon_i$ can be made close to zero, and ordinary proper scoring applies. This is a useful baseline but not the domain that motivates natural language mechanisms.

\subsection{Self-resolution and peer labels}

When $Y^*$ is unavailable, a later or better-informed report may serve as a proxy \parencite{srinivasan2023selfresolving}. In the advice architecture, the reference-selection rule must itself be factorized. Adviser $i$ should not choose which later report is used, write the summary shown to the reference agent, or supply the rubric by which agreement is judged. Cross-fitting can help: use one partition of reports to form the public decision and another to construct labels, rotating roles across advisers. This is the mechanism-design analogue of cross-fitting in double/debiased machine learning, where sample splitting prevents a unit's own data from contaminating the nuisance estimate used to evaluate it \parencite{chernozhukov2018double}.

\subsection{Evidence and adjudication}

Evidence markets explicitly reward evidence and permit endogenous resolution when external truth is absent \parencite{hossain2026evidence}. Their evidence verification and staking machinery can instantiate the hardening and label-production layers of this paper. The remaining architectural question is whether trader $i$'s submitted evidence directly changes the label against which trader $i$ is paid, beyond the influence already accounted for by the evidence-market incentive theorem. A factorized implementation can use leave-one-out adjudication, independent challengers, or precommitted evidence-quality rules to estimate the residual $\varepsilon_i$.

\subsection{Hybrid labels}

Real institutions often have partial verification. A software task has unit tests but also qualitative requirements. A grant has publication outcomes but also long delays and selection effects. A moderation decision has policy text, peer judgments, and appeals. Hybrid settlement combines these sources.

The information-margin result in \cref{sec:exact} gives a disciplined reason to add a component: under log scoring, it should contribute conditional information about the adviser's signal after the existing label is known. The leakage theorem gives the opposing cost: every additional model-mediated component may increase $\varepsilon_i$. A larger label is not automatically a better label.

The leakage of a hybrid is controlled by its components only under an independence condition.

\begin{lemma}[Leakage composition for hybrid labels]\label{lem:composition}
Let $Y_i=h(C_1,\ldots,C_K)$ for a measurable map $h$, where each operational component $C_k$ is $\varepsilon_k$-leaky relative to a reference component $C_k^\circ$ in the componentwise sense of \cref{def:leaky}. Suppose that, conditionally on the leakage conditioning variables---$(s,r,Z=z)$ for the operational components and $(s,e_{-i},Z=z)$ for the reference components---the components $C_1,\ldots,C_K$ are mutually independent, as are $C_1^\circ,\ldots,C_K^\circ$. Then $Y_i$ is $\bigl(\sum_{k=1}^K\varepsilon_k\bigr)$-leaky relative to $Y_i^\circ=h(C_1^\circ,\ldots,C_K^\circ)$.
\end{lemma}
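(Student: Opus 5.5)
The plan is to reduce the claim to a product-measure total-variation bound, then push it through the map $h$ by data processing. Fix a scenario $(s,r)$ with hardened advice $e=e(r)$ and a realized $z$ in the support of $\Law(Z\mid s,r)$. Write $\mu_k=\Law(C_k\mid s,r,Z=z)$ and $\mu_k^\circ=\Law(C_k^\circ\mid s,e_{-i},Z=z)$. The componentwise leakage hypothesis gives $\TV(\mu_k,\mu_k^\circ)\le\varepsilon_k$ for almost every such $z$. The conditional-independence hypothesis identifies the joint conditional law of $(C_1,\ldots,C_K)$ with the product $\mu_1\otimes\cdots\otimes\mu_K$, and that of the reference components with $\mu_1^\circ\otimes\cdots\otimes\mu_K^\circ$. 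Since a countable union of null sets is null, all $K$ bounds hold simultaneously off a single null set of $z$, which is all the essential-supremum formulation of \cref{def:leaky} requires.

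The first step is the standard subadditivity bound
\[
\TV\bigl(\textstyle\bigotimes_k\mu_k,\bigotimes_k\mu_k^\circ\bigr)\le\sum_k\TV(\mu_k,\mu_k^\circ).
\]
I would prove it by telescoping through the hybrid products $\nu_j=\mu_1^\circ\otimes\cdots\otimes\mu_j^\circ\otimes\mu_{j+1}\otimes\cdots\otimes\mu_K$. Consecutive hybrids differ in one factor only. For any event $A$ in the product space, Fubini writes $\nu_{j-1}(A)-\nu_j(A)$ as an integral, over the other coordinates, of $\mu_j(A_x)-\mu_j^\circ(A_x)$, where $A_x$ is the section of $A$. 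Each integrand is bounded by $\TV(\mu_j,\mu_j^\circ)$. Alternatively, one can glue maximal couplings coordinatewise and apply a union bound over the coordinates that disagree.

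The second step is data processing. Since $Y_i=h(C_1,\ldots,C_K)$ and $Y_i^\circ=h(C_1^\circ,\ldots,C_K^\circ)$, the two conditional laws are pushforwards of the product laws under the same measurable $h$. Pushforward cannot increase total variation, because $\{h\in B\}$ is an event in the product space. This gives
\[
\TV\bigl(\Law(Y_i\mid s,r,Z=z),\Law(Y_i^\circ\mid s,e_{-i},Z=z)\bigr)\le\sum_k\varepsilon_k,
\]
which is the required leakage bound.

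I expect the main obstacle to be measure-theoretic bookkeeping rather than the inequality itself. Two points need care. First, the reference $Y_i^\circ$ built from the $C_k^\circ$ must itself be conditionally influence-free; this holds because each $C_k^\circ$ is a reference component and so excludes $e_i$. Second, the conditional laws must be regular versions and the identity "conditional law equals product of marginals" must hold on the same full-measure set; I would secure this by choosing regular versions first and intersecting the finitely many full-measure sets. I would close with a remark explaining why independence is needed: without it, the joint laws need not be products, and correlated components can each have small marginal leakage while their combination, such as an XOR, moves far. The paper's counterexample presumably shows exactly this.
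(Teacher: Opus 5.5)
Your proposal is correct and follows the paper's proof essentially step for step. Like the paper, you take product laws from conditional independence, telescope through hybrids that differ in a single factor, and finish with data processing under $h$. The only difference is cosmetic: you bound each hybrid step by integrating over sections via Fubini, while the paper couples the differing factor to get $\TV(\alpha\otimes\rho,\beta\otimes\rho)=\TV(\alpha,\beta)$, and either suffices because only the inequality is needed.
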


\begin{remark}[Independence is essential]\label{rem:xor}
Let $C_1^\circ,C_2^\circ$ be independent uniform bits, and let the operational components be $C_1$ uniform and $C_2=C_1\oplus\mathbf 1\{\text{deviation}\}$. Each operational component is conditionally identical in law to its reference ($\varepsilon_1=\varepsilon_2=0$ componentwise), yet the pair $(C_1,C_2)$ determines whether the adviser deviated: the hybrid $h(C_1,C_2)=C_1\oplus C_2$ equals $\mathbf 1\{\text{deviation}\}$ exactly. Per-component audits therefore certify nothing about a hybrid label unless independence is enforced by construction---separate randomness, separate models, disjoint contexts---or the joint law is audited directly.
\end{remark}

\subsection{Certified factorization by construction}
\label{sec:certified}

Estimating $\varepsilon_i$ statistically, as in \cref{sec:measurement}, yields lower bounds and falsification: a classifier that distinguishes the operational label law from the reference law certifies that leakage is at least some amount, but over rich label spaces no feasible sample certifies that leakage is at most $\varepsilon_i$. Upper bounds must come from the architecture. Differential privacy supplies exactly the right currency: a pipeline that is differentially private in the paid adviser's report slot is leakage-bounded by construction, with the added benefit that its guarantees compose adaptively, where \cref{lem:composition} could not.

\begin{proposition}[Differential privacy certifies leakage]\label{prop:dp}
Fix adviser $i$ and define the reference pipeline as the operational pipeline run with adviser $i$'s report replaced by a fixed admissible placeholder $\bot_i$ (a leave-one-out reference---the degenerate ghost of \cref{thm:ghost}). Suppose that, for every conditioning scenario of \cref{def:leaky}, the operational label pipeline is $\delta_i$-differentially private in adviser $i$'s report slot: for all reports $r_i,\widetilde r_i$, all $r_{-i}$, and all measurable $A$,
\[
    \Prb\bigl(Y_i\in A\mid s,(r_i,r_{-i}),Z=z\bigr)
    \le e^{\delta_i}\,
    \Prb\bigl(Y_i\in A\mid s,(\widetilde r_i,r_{-i}),Z=z\bigr).
\]
Then $Y_i$ is $(1-e^{-\delta_i})$-leaky, hence $\delta_i$-leaky. Moreover:
\begin{enumerate}[label=(\alph*),leftmargin=1.9em]
    \item \textbf{Post-processing.} Any measurable function of $Y_i$ and report-independent randomness satisfies the same bound.
    \item \textbf{Adaptive composition.} If the label is $h(C_1,\ldots,C_K)$, where each component is produced with independent internal randomness by a stage that is $\delta_{i,k}$-DP in adviser $i$'s report slot for every realization of the earlier components, then the hybrid label is $\bigl(1-e^{-\sum_k\delta_{i,k}}\bigr)$-leaky---even when later components are chosen adaptively as functions of earlier ones.
\end{enumerate}
\end{proposition}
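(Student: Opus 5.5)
The argument reduces everything to one elementary fact: pure $\delta$-DP gives a pointwise likelihood-ratio bound, and such a bound controls total variation. If $\mu(A)\le e^{\delta}\nu(A)$ for all measurable $A$, then applying this to complements gives $\mu(A)\ge 1-e^{\delta}(1-\nu(A))$. The cleaner route is to use the reverse inequality as well: since the DP condition holds for all ordered pairs $(r_i,\widetilde r_i)$, we also have $\nu(A)\le e^{\delta}\mu(A)$. Then, for any $A$, $\mu(A)-\nu(A)\le\mu(A)(1-e^{-\delta})\le 1-e^{-\delta}$. Taking the supremum over $A$ gives $\TV(\mu,\nu)\le 1-e^{-\delta}$. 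Finally, $1-e^{-\delta}\le\delta$ yields the weaker $\delta_i$-leaky statement.

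To apply this to \cref{def:leaky}, fix a scenario $(s,r,z)$. The reference $Y_i^\circ$ is the operational pipeline run on the profile $(\bot_i,r_{-i})$, so its conditional law given $(s,e_{-i},Z=z)$ is $\Law(Y_i\mid s,(\bot_i,r_{-i}),Z=z)$. We need $\bot_i$ to be an admissible report and the hypothesis to hold for the pair $(r_i,\bot_i)$.

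The main obstacle is the conditioning on the endogenous event $Z=z$ (\cref{rem:leaky-endogeneity}). The two runs induce different laws of $Z$, and the DP hypothesis is stated for conditional probabilities given $Z=z$. I will take it exactly as stated, scenario by scenario on the common support. Where $z$ lies outside the support of the placeholder run, I will note that the definition restricts to positive-probability conditioning events. I will also check that the placeholder run conditions on the same $z$, since $Z$ is the same measurable object. With that in place, the bound is immediate.

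Part (a) is the standard post-processing argument. Let $g(Y_i,W)$ be a function with $W$ independent of the report. Then $\Prb(g(Y_i,W)\in A)=\int\Prb(Y_i\in A_w)\,\Prb_W(dw)$, where $A_w$ is the corresponding section of $g^{-1}(A)$. The ratio bound holds for each section, so it survives integration. Applying the first step to the reference obtained by post-processing the placeholder run then gives the claim.

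Part (b) uses the chain rule for densities. Write the joint law of $(C_1,\ldots,C_K)$ as a product of stagewise kernels $\Prb(C_k\in\cdot\mid C_{<k},\ldots)$. Each kernel satisfies a $\delta_{i,k}$ ratio bound for every realization of $C_{<k}$. For product sets, the bounds multiply, giving $e^{\sum_k\delta_{i,k}}$. For general measurable $A$, I will establish the bound on Radon--Nikodym derivatives: each stage kernel's density ratio is at most $e^{\delta_{i,k}}$ almost everywhere, so the joint density ratio is at most $e^{\sum_k\delta_{i,k}}$, and integrating over $A$ gives the set bound. This step relies on the independent internal randomness and on adaptivity entering only through conditioning on $C_{<k}$. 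The hybrid $h(C_1,\ldots,C_K)$ is a post-processing of the joint vector, so the first step gives leakage $1-e^{-\sum_k\delta_{i,k}}$. No independence across components is needed, which is what distinguishes this from \cref{lem:composition} and \cref{rem:xor}.
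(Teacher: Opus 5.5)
Your proposal is correct and follows essentially the same route as the paper. The main bound is the one-sided estimate $\mu_{r_i}(A)-\mu_{\bot_i}(A)\le(1-e^{-\delta_i})\mu_{r_i}(A)$ from the likelihood-ratio inequality; post-processing is handled by averaging over sections, and adaptive composition by the chain-rule product of stagewise density ratios, each bounded for every realization of $c_{1:k-1}$, then post-processed through $h$. One small slip in attribution: your displayed inequality $\mu(A)-\nu(A)\le(1-e^{-\delta})\mu(A)$ uses the original direction $\mu(A)\le e^{\delta}\nu(A)$, not the reverse one, and the other sign then follows either from the reverse inequality (the paper's ``symmetrically'') or simply by complementation, since $\nu(A)-\mu(A)=\mu(A^c)-\nu(A^c)$.
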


\begin{remark}[Why the DP route survives adaptivity]\label{rem:dp-adaptive}
\Cref{lem:composition} requires conditional independence across components, and \cref{rem:xor} shows the requirement cannot be dropped: marginally clean components can jointly encode the report. Differential privacy is immune to that construction because it is a pointwise likelihood-ratio bound on each stage's \emph{conditional} law given everything produced so far; the XOR stage $C_2=C_1\oplus\mathbf 1\{\text{deviation}\}$ has an infinite likelihood ratio and is not $\delta$-DP for any finite $\delta$. The standard privacy toolbox \parencite{dwork2014roth} thus transfers wholesale: composition budgets $\varepsilon_i$ across pipeline stages, post-processing is data-processing invariance, and randomized audit settlement (\cref{prop:audit}) is privacy amplification by subsampling in mechanism-design clothing.
\end{remark}

Architecturally, the proposition says an institution can \emph{purchase a certificate}: inject calibrated randomness (randomized response over label bins, noisy score aggregation) or enforce separation (redaction with verified provenance, cross-model evaluation with committed weights, trusted execution), account the resulting $\delta_i$ per stage, and publish the budget. The measurement program of \cref{sec:measurement} then audits the certificate from below while the DP accounting bounds it from above. The cost is informational---noise in the label reduces the very margins of \cref{prop:info}---and pricing that tradeoff exactly (for example, the margin cost of randomized-response noise as a function of the certified budget) is left to future work.

\section{What is specific to language?}
\label{sec:language}

Most of the mathematical results concern information, decisions, and settlement. They would still apply to structured evidence vectors. Language matters because it defines the interface through which those objects are created and attacked.

\paragraph{Constitutions are textual.}
Admissibility, objectives, exceptions, and tie-breaking rules are written in words. Nearby rewrites can change which distinctions a model notices. Constitution sensitivity is therefore a mechanism property, not merely a prompt-engineering detail.

\paragraph{Hardening is an institutional act.}
The hardener decides which claims become official, what support is attached, and which instructions are treated as data rather than commands. Indirect prompt injection is a failure of this boundary \parencite{greshake2023indirect,liu2023prompt}. So are strategic presentation effects that alter machine decisions without adding corresponding human-relevant information. The latter are formalized as mecha-nudges by \textcite{frey2026mecha}.

\paragraph{Persuasion and information are not identical.}
An unrestricted report can increase a model's propensity to choose an action without increasing the externally measurable quality of that action. Canonicalization, claim extraction, adversarial redaction, and provenance can reduce this gap, but they can also discard useful nuance. The hardening map is therefore part of the mechanism and should be evaluated for both incentive compatibility and information loss.

\paragraph{Written resolutions can fork institutional reality.}
If the memo is generated independently after payments are determined, it may explain a decision the institution did not actually settle or introduce unsupported findings. Provenance-linked writing makes the public text a readout of $Z$, not a second adjudicator.

\paragraph{Module names do not establish independence.}
Separate prompts to the same model can share biases, hidden state, retrieval context, or training contamination. Operational factorization may require access controls, secret holdouts, cross-model evaluation, role separation, or cryptographic commitments. The prompt-injection defense literature supplies concrete architectures for such separation \parencite{debenedetti2025camel,beurer2025design}, and documented self-preference of LLM judges toward their own generations \parencite{panickssery2024llm} shows why cross-model evaluation in particular matters. The theorem prices the resulting distributional leakage; it does not certify a software diagram.

\section{Limits and complementary layers}
\label{sec:limits}

Settlement factorization addresses one failure mode: evaluator capture by the report being evaluated. It does not imply that the external evaluator is correct, fair, coherent, or coalition-proof.

\paragraph{Normative correctness.}
A factorized mechanism can faithfully implement a bad constitution. Empirical work finds substantial gaps between unconstrained LLM allocations and distributive-fairness criteria, with menu-based choice sometimes performing better than free-form generation \parencite{hosseini2025fairness}. Welfare, fairness, and legal constraints should therefore be encoded explicitly in the public decision rule rather than inferred from evaluator independence.

\paragraph{Identity and clone attacks.}
An agent may create multiple near-identical providers. Clone-robust ranking mechanisms address this entry-layer problem \parencite{hays2026candidacy}. Leave-one-out labels indexed by nominal identity are ineffective when one economic actor controls several identities.

\paragraph{Coalition payments.}
Multi-agent production creates a separate credit-assignment problem. Least-core methods can divide value among contributors in a coalition-stable way \parencite{harris2026core}. Such a rule can sit downstream of factorized coalition-value estimates, but factorization alone does not choose the division.

\paragraph{Collusion and orchestration.}
Repeated advisers may coordinate reports, prices, or mutual evaluations. Meta-game approaches distinguish adaptation among deployed agents from coordination induced by common designers or shared parameterization \parencite{carissimo2025orchestration,luo2026collusion}. A settlement pipeline can be individually influence-free and still coalition-manipulable.

\paragraph{Behavioral implementation.}
Mechanism guarantees describe best responses, not whether a particular language model finds them. Work on reasoning agents in repeated games studies when off-the-shelf agents approach equilibrium behavior \parencite{kang2026reasoning}; controlled prediction-market experiments show that LLM information aggregation degrades with structural complexity even when markets remain relatively robust to several interventions \parencite{galanis2026aggregation}. These are complementary empirical questions.

\section{Measuring the design variables}
\label{sec:measurement}

The theory suggests measuring structural quantities rather than only final task accuracy.

\paragraph{Externalized advice margin $\gamma_i$.}
Estimate the payoff difference between faithful advice and strategically chosen alternatives when labels are generated by the reference pipeline. For log-scored forecasts, estimate the conditional information retained by faithful reports. For recommendations, compare expected principal value under the recommended actions.

\paragraph{Own-label leakage $\varepsilon_i$.}
Construct paired evaluations in which the public decision $Z$, other evidence, and outside data are held fixed while adviser $i$'s report is varied, redacted, or replaced. Estimate a divergence between the resulting operational-label distribution and a leave-one-out or holdout reference. Total variation is the theorem's quantity; classifiers and calibrated density-ratio estimators provide practical lower bounds. Architectural certificates complement these statistical audits from the other side: by \cref{prop:dp}, a differential-privacy budget on the evaluator's report slot is a provable upper bound on $\varepsilon_i$.

\paragraph{Decision interest $D_i$.}
Estimate how much an adviser can gain outside the score by steering the decision: winning a contract, selecting an easier task, changing downstream exposure, or favoring an affiliate. This quantity determines how large the score margin must be even under perfect factorization.

\paragraph{Constitution sensitivity.}
For neighboring constitutions $C,C'$, define
\[
    \kappa(C,C')=\TV\big(\Law(Z\mid C),\Law(Z\mid C')\big).
\]
Paraphrase tests should distinguish benign semantic invariance from legitimate changes in institutional meaning.

\paragraph{Memo fidelity.}
Measure
\[
    m=\Prb[\chi(A)=Z]
\]
and audit whether every material sentence in $A$ is supported by provenance $P$. A high-quality explanation that cannot recover the settled object is an institutional inconsistency, not merely a summarization error.

\subsection{Suggested experiments}

Four experiments would directly test the paper's claims.

\begin{enumerate}[leftmargin=1.5em]
    \item \textbf{Scalar markets versus language advice.} Give LLM agents identical private-signal environments and compare scalar prediction markets with reports containing conditional forecasts, rationales, and evidence. Existing experiments provide a market baseline \parencite{galanis2026aggregation}.
    \item \textbf{Leakage curves.} Gradually reveal an adviser's report to its evaluator while holding the public decision fixed. Plot truthful payoff margin against estimated $\varepsilon_i$ and compare with the $2L_i\varepsilon_i$ envelope; \cref{prop:tightness} predicts that pandering-style manipulations against a biased evaluator can saturate it.
    \item \textbf{Mecha-nudge stress tests.} Preserve hardened factual content while varying presentation, ordering, and embedded instructions. A robust hardener should keep $e_i$ and the externalized label stable even when the unfactorized judge changes its decision \parencite{frey2026mecha}.
    \item \textbf{Hybrid-label ablations.} Add tests, audits, peer labels, and outside outcomes one at a time. Measure both incremental information margin and incremental leakage, and audit the independence hypothesis of \cref{lem:composition} rather than only per-component leakage. This tests the central tradeoff of \cref{sec:labels}.
\end{enumerate}

\section{The mechanism-level normal form: what remains conjectural}
\label{sec:conjecture}

\Cref{sec:normalform} proved a normal form at the level of settlement channels: every presentation of every mechanism carries a canonical ghost reference, intrinsic leakage is a well-defined invariant within a factor of two of the best achievable reference, and worst-case erosion identifies it (\cref{thm:ghost,prop:converse}). A strictly stronger claim remains open at the level of whole mechanisms. Call two mechanisms \emph{$\delta$-equivalent} if, for every report profile and signal realization, their induced joint distributions over the public record, transfers, and written resolution are within $\delta$ in total variation.

\begin{conjecture}[Approximate normal form for robust advice mechanisms; informal]
Consider a language advice mechanism whose truthful equilibrium is robust to hardening-preserving paraphrases and whose payment-relevant behavior is stable under changes to the written-resolution channel that preserve the public decision record. Suppose all intended report influence on outcomes can be represented through an explicit public record $Z$. Then for some small $\delta$ the mechanism has a $\delta$-equivalent implementation with:
\begin{enumerate}[leftmargin=1.5em]
    \item a constitution-dependent hardening map;
    \item an explicit public decision record $Z$;
    \item agent-specific labels that are conditionally influence-free from the paid agent's own hardened advice;
    \item payment kernels evaluated against those labels; and
    \item a provenance-linked writer downstream of $Z$.
\end{enumerate}
Any remaining strategic difference from the original implementation can be represented by an own-label leakage term $\varepsilon_i$ and an explicit decision-coupling term inside the externalized advice margin.
\end{conjecture}

If true, the conjecture would make settlement factorization more than good engineering practice. It would say that robust language advice mechanisms can be compiled into a causal normal form: intended influence flows through the public decision; evaluation influence is externalized or priced.

The channel-level results prove the tractable part of this converse and delimit the rest. Hardening invariance supplies the representation step for reports; the ghost reference supplies the influence-free label constructively, presentation by presentation; and the erosion identity makes the residual $\varepsilon_i$ behaviorally meaningful. What the conjecture adds, per \cref{rem:normalform}, is $\delta$-equivalence of the full joint distribution---including the written channel---under the mechanism's own \emph{fixed} kernel, together with a canonical choice among presentations. A full proof would need to solve that minimization over presentations, characterize when arbitrary model-mediated evaluators admit a sufficient externalized label, and rule out coalition or identity effects that prevent adviser-wise factorization.

\section{Conclusion}

Natural language mechanisms are most useful when reports do more than predict. They advise: they propose actions, explain tradeoffs, provide evidence, and change what a principal does. That is why the correct robustness requirement is not to remove the report from the decision. It is to prevent the report from covertly controlling the yardstick used to pay itself.

\textcite{dellapenna2024nlm} identifies a positive end-to-end regime for self-resolution. This paper describes the broader institutional architecture. Harden reports into official advice. Let the public decision use that advice. Externalize each adviser's payment label. Generate the explanation from the settled record and its provenance. Then treat evaluator leakage as a measurable mechanism parameter.

Recent work supplies increasingly sophisticated ways to create truthful scores for text, evidence, rich queries, unverifiable outcomes, and decision-coupled predictions. Settlement factorization is the layer that lets those mechanisms compose. If an upstream design creates margin $\gamma_i$, direct decision interests cost at most $D_i$, and the operational evaluator has leakage $\varepsilon_i$ and sensitivity $L_i$, the institution has a concrete robustness test:
\[
    \gamma_i>D_i+2L_i\varepsilon_i.
\]
The inequality is not a complete theory of advice markets. It is a useful boundary: reward the information that changes the decision, but do not let the advice write its own answer key. And the test is universal: by the ghost-reference theorem, every settlement channel carries a canonical influence-free reference, so $\varepsilon_i$ is defined---and by the erosion converse, measurable---for any mechanism whatsoever, not only for those designed with factorization in mind.

\FloatBarrier
\appendix

\section{Proofs}

\begin{proof}[Proof of \cref{prop:hardening} (hardening invariance)]
If $r_i,\widetilde r_i\in T_i^C(s_i)$, then
\[
    h_i^C(r_i)=h_i^C(\widetilde r_i)=\phi_i^C(s_i).
\]
Under exact factorization, every downstream computation depends on reports only through hardened advice. Replacing $r_i$ by $\widetilde r_i$ therefore leaves the hardened profile unchanged, and hence leaves $Z$, all label distributions, transfers, provenance, and the written resolution unchanged. The argument is pointwise in $r_{-i}$.
\end{proof}

\begin{proof}[Proof of \cref{lem:tv-expectation} (TV--expectation bound)]
Since $(\mu-\nu)(\calY)=0$, subtracting the constant $\tfrac{a+b}{2}$ leaves the difference of expectations unchanged:
\[
    \E_\mu[f]-\E_\nu[f]=\int\Bigl(f-\tfrac{a+b}{2}\Bigr)\,d(\mu-\nu).
\]
The integrand is bounded by $\tfrac{b-a}{2}$ in absolute value, so the integral is bounded by $\tfrac{b-a}{2}\,\|\mu-\nu\|$, where $\|\mu-\nu\|$ is the total variation norm of the signed measure $\mu-\nu$. By the Hahn decomposition, $\|\mu-\nu\|=2\sup_A|\mu(A)-\nu(A)|=2\TV(\mu,\nu)$, giving the claim.
\end{proof}

\begin{proof}[Proof of \cref{prop:exact} (exact factorization preserves truthful advice)]
Fix $s$, adviser $i$, a faithful report $r_i^T$, and deviation $r_i'$. The baseline $b_i(e_{-i},X,U)$ is the same under both reports. Under exact factorization, the expected utility difference between faithful advice and the deviation is
\begin{align*}
&\E\big[G_i(\phi_i^C(s_i),Z^T,Y_i^{\circ,T})
-G_i(h_i^C(r_i'),Z^D,Y_i^{\circ,D})\mid s\big]\\
&\qquad+\E\big[w_i(Z^T,\theta)-w_i(Z^D,\theta)\mid s\big]\\
&=m_i^\circ(s,r_i')-d_i(s,r_i')
=\Gamma_i^\circ(s,r_i').
\end{align*}
By hypothesis this is nonnegative for every deviation, so every faithful report is a best response after every realized signal profile. Strictness follows whenever hardened advice changes and the margin is strict.
\end{proof}

\begin{proof}[Proof of \cref{cor:proper} (proper scoring)]
Condition on adviser $i$'s signal $s_i$. The truthful hardened advice induces the conditional law $q_i(\phi_i^C(s_i))=\Law(Y_i^\circ\mid s_i)$. Because $S_i$ is strictly proper and, by \cref{ass:neutrality}, the law of $Y_i^\circ$ is invariant to adviser $i$'s report, this forecast uniquely maximizes expected score among all alternative induced forecasts. The baseline is report-independent and $w_i\equiv0$, so faithful advice is a Bayesian best response. Strictness holds when an evidence-changing deviation changes the forecast with positive probability.
\end{proof}

\begin{proof}[Proof of \cref{prop:info} (information margin)]
Condition on $(s_i,R_i)$. For deterministic garblings $R_i=g_i(s_i)$, conditioning on $(s_i,R_i)$ is conditioning on $s_i$; for stochastic garblings, the Markov condition $Y_i^\circ\ind R_i\mid s_i$ gives $\Law(Y_i^\circ\mid s_i,R_i)=\Law(Y_i^\circ\mid s_i)=p_{s_i}$, so in either case the label law given $(s_i,R_i)$ is $p_{s_i}$. Properness then gives
\begin{align*}
&\E\left[S_i(p_{s_i},Y_i^\circ)-S_i(p_{R_i},Y_i^\circ)\mid s_i,R_i\right]\\
&\qquad= D_{S_i}(p_{s_i},p_{R_i}).
\end{align*}
Taking expectations yields the first identity. Under the logarithmic score $S(p,y)=\log p(y)$,
\begin{align*}
\E\left[\log\frac{p(Y_i^\circ\mid s_i)}{p(Y_i^\circ\mid R_i)}\right]
=I(Y_i^\circ;s_i\mid R_i),
\end{align*}
which is the definition of conditional mutual information.
\end{proof}

\begin{proof}[Proof of \cref{cor:decision} (decision-value score)]
Condition on $s_i$. By hypothesis, $\alpha_i(\phi_i^C(s_i))$ maximizes the conditional expected value $\E[v(a,Y_i^\circ)\mid s_i]$ over every action reachable through an admissible report. By \cref{ass:neutrality} the label law is invariant to the report, and the adviser has no direct decision utility, so no deviation can improve expected payment.
\end{proof}

\begin{proof}[Proof of \cref{thm:leakage} (approximate settlement factorization)]
Fix $s$, adviser $i$, a faithful report $r_i^T$, and a deviation $r_i'$; write $r^T,r^D$ for the induced report profiles with the other advisers faithful, and $e_i^T=\phi_i^C(s_i)$, $e_i^D=h_i^C(r_i')$. For each $z$ in the support of the relevant scenario, define
\begin{align*}
\mu_T(z)&=\Law(Y_i\mid s,r^T,Z=z),
&\mu_T^\circ(z)&=\Law(Y_i^\circ\mid s,e_{-i},Z=z),\\
\mu_D(z)&=\Law(Y_i\mid s,r^D,Z=z),
&\mu_D^\circ(z)&=\Law(Y_i^\circ\mid s,e_{-i},Z=z).
\end{align*}
By $\varepsilon_i$-leakage (\cref{def:leaky}),
\[
    \TV(\mu_T(z),\mu_T^\circ(z))\le\varepsilon_i
    \qquad\text{and}\qquad
    \TV(\mu_D(z),\mu_D^\circ(z))\le\varepsilon_i
\]
for almost every $z$ in the support of $Z$ under $(s,r^T)$ and $(s,r^D)$ respectively. Applying \cref{lem:tv-expectation} pointwise in $z$ with $f=G_i(e_i^T,z,\cdot)\in[a_i,b_i]$ and then with $f=G_i(e_i^D,z,\cdot)$,
\begin{align*}
\E_{Y\sim\mu_T(z)}[G_i(e_i^T,z,Y)]
&\ge \E_{Y\sim\mu_T^\circ(z)}[G_i(e_i^T,z,Y)]-L_i\varepsilon_i,\\
\E_{Y\sim\mu_D(z)}[G_i(e_i^D,z,Y)]
&\le \E_{Y\sim\mu_D^\circ(z)}[G_i(e_i^D,z,Y)]+L_i\varepsilon_i.
\end{align*}
Average the first inequality over $Z^T$ and the second over $Z^D$ by the tower property. This is legitimate because the law of $Z$ in each scenario is identical in the operational and reference systems: by item 1 of \cref{def:factorization}, $Z$ depends on reports only through hardened advice, and leakage enters only through the label. The direct decision-utility terms $w_i(Z,\theta)$ and the report-independent baseline $b_i$ are likewise identical across the two systems. Subtracting the deviation side from the truthful side gives
\[
    \Gamma_i(s,r_i')\ge\Gamma_i^\circ(s,r_i')-2L_i\varepsilon_i.
\]
Taking the infimum over hardened-advice-changing deviations gives the strict signal--ex-post equilibrium condition.
\end{proof}

\begin{proof}[Proof of \cref{cor:interests} (decision interests plus leakage)]
For every $s,r_i'$, one has $m_i^\circ(s,r_i')\ge\underline m_i$ and $d_i(s,r_i')\le D_i$. Hence
\[
    \Gamma_i^\circ(s,r_i')=m_i^\circ(s,r_i')-d_i(s,r_i')
    \ge \underline m_i-D_i.
\]
The leakage theorem then gives
\[
    \Gamma_i(s,r_i')\ge \underline m_i-D_i-2L_i\varepsilon_i>0.
\]
\end{proof}

\begin{proof}[Proof of \cref{prop:brier} (binary own-label manipulation)]
Let truthful advice report $q=1/2$ and let its operational label equal $Y^\circ\sim\mathrm{Bernoulli}(1/2)$. Its expected Brier payment is
\[
    \E[-(1/2-Y^\circ)^2]=-1/4.
\]
Consider a deviation that reports $q'=1/2+\varepsilon$ and shifts the operational label to $Y'\sim\mathrm{Bernoulli}(1/2+\varepsilon)$. The total variation distance between the two Bernoulli laws is $\varepsilon$. The deviating expected score is
\[
    \E[-(1/2+\varepsilon-Y')^2]
    =-\left(\tfrac12+\varepsilon\right)\left(\tfrac12-\varepsilon\right)
    =-\tfrac14+\varepsilon^2.
\]
The deviation therefore improves expected payment by $\varepsilon^2$.
\end{proof}

\begin{proof}[Proof of \cref{prop:tightness} (exact tightness)]
The environment has a single adviser, a binary label space, a trivial public record, and $w\equiv 0$, so margins consist of the score term alone and the baseline cancels. The kernel $G$ takes values in $[0,1]$, so it is $1$-stable by \cref{lem:tv-expectation}; take $L=1$. Truthful hardened advice encodes the posterior $q=1/2$ and thus takes the branch $q\le 1/2$, earning $\E[1-Y]$ under whichever label law applies; the deviation $q'=1/2+\varepsilon$ takes the branch $q>1/2$, earning $\E[Y]$.

Under the reference label $Y^\circ\sim\mathrm{Bernoulli}(1/2)$, every report earns expected payment $1/2$, so $\Gamma^\circ(s,r')=0$ for every deviation, and faithful reporting is weakly optimal in the reference system, consistent with \cref{prop:exact}.

Part (a). The operational label deviates from the reference only when the report declares $q>1/2$, and then $\TV(\mathrm{Bernoulli}(1/2+\varepsilon),\mathrm{Bernoulli}(1/2))=\varepsilon$; otherwise the laws coincide. Hence $Y$ is $\varepsilon$-leaky. Truthful advice leaves the label untouched and earns $1/2$; the deviation earns $\E[Y']=1/2+\varepsilon$. Therefore $\Gamma(s,r')=1/2-(1/2+\varepsilon)=-\varepsilon=\Gamma^\circ(s,r')-L\varepsilon$.

Part (b). The operational label law is $\mathrm{Bernoulli}(1/2+\varepsilon)$ under every report, so every conditional comparison with the reference has total variation exactly $\varepsilon$, and $Y$ is $\varepsilon$-leaky. Truthful advice now earns $\E[1-Y]=1/2-\varepsilon$, while the deviation earns $\E[Y]=1/2+\varepsilon$. Therefore $\Gamma(s,r')=-2\varepsilon=\Gamma^\circ(s,r')-2L\varepsilon$, matching the bound of \cref{thm:leakage} with equality: the truthful side loses $L\varepsilon$ to the adverse bias and the deviation side gains $L\varepsilon$ by pandering to it.
\end{proof}

\begin{proof}[Proof of \cref{lem:composition} (leakage composition)]
Fix conditioning values $(s,r,Z=z)$ in the support and abbreviate the operational component laws $\mu_k=\Law(C_k\mid s,r,Z=z)$ and the reference component laws $\nu_k=\Law(C_k^\circ\mid s,e_{-i},Z=z)$; by componentwise leakage, $\TV(\mu_k,\nu_k)\le\varepsilon_k$, and by the independence hypotheses the joint operational law is $\bigotimes_k\mu_k$ and the joint reference law is $\bigotimes_k\nu_k$. Two standard facts complete the argument.

(i) \emph{Product telescoping:} $\TV(\bigotimes_k\mu_k,\bigotimes_k\nu_k)\le\sum_k\TV(\mu_k,\nu_k)$. Define the hybrid measures $h_j=\mu_1\otimes\cdots\otimes\mu_j\otimes\nu_{j+1}\otimes\cdots\otimes\nu_K$ for $j=0,\ldots,K$, so $h_0=\bigotimes_k\nu_k$ and $h_K=\bigotimes_k\mu_k$. Consecutive hybrids differ in a single factor, and for measures sharing a common factor $\rho$ one has $\TV(\alpha\otimes\rho,\beta\otimes\rho)=\TV(\alpha,\beta)$: take an optimal coupling of $(\alpha,\beta)$ and couple the $\rho$-coordinates identically, so the coupled pair disagrees exactly when the first coordinates do; the reverse inequality follows by restricting to rectangle events. The triangle inequality across the $K$ hybrid steps gives the sum.

(ii) \emph{Data processing:} for any measurable $h$, $\TV(h_*\mu,h_*\nu)\le\TV(\mu,\nu)$, since events of the form $\{h\in A\}$ are a sub-collection of the measurable events.

Combining, $\TV(\Law(Y_i\mid s,r,Z=z),\Law(Y_i^\circ\mid s,e_{-i},Z=z))\le\sum_k\varepsilon_k$ for almost every conditioning value, which is the claim.
\end{proof}

\begin{proof}[Proof of \cref{cor:stakes} (stakes)]
If $G_i$ is $L_i$-stable, then for any $\mu,\nu$,
$|\E_\mu[cG_i]-\E_\nu[cG_i]|=c\,|\E_\mu[G_i]-\E_\nu[G_i]|\le cL_i\TV(\mu,\nu)$,
so $G_i^c$ is $cL_i$-stable. Every score term in \cref{eq:score-margin} scales by $c$, so $m_i^{\circ,c}=c\,m_i^\circ$, while \cref{eq:decision-gain} does not involve the transfer and is unchanged. Applying \cref{thm:leakage} to the scaled kernel,
\[
\Gamma_i^c(s,r_i')\ \ge\ c\,m_i^\circ(s,r_i')-d_i(s,r_i')-2cL_i\varepsilon_i
\ \ge\ c\,(\underline m_i-2L_i\varepsilon_i)-D_i,
\]
which is strictly positive under the stated condition; part (a) follows by solving for $c$. For part (b), scale the environment of \cref{prop:tightness}(b) by $c$: the kernel $cG$ takes values in $[0,c]$, truthful advice earns $c(1/2-\varepsilon)$ and the pandering deviation earns $c(1/2+\varepsilon)$, so the deviation profit is $2c\varepsilon=2cL\varepsilon$, strictly increasing in $c$.
\end{proof}

\begin{proof}[Proof of \cref{prop:audit} (audited settlement)]
Fix a conditioning scenario of \cref{def:leaky} and write $\mu=\Law(Y_i\mid s,r,Z=z)$ and $\mu^\circ=\Law(Y_i^\circ\mid s,e_{-i},Z=z)$, with $\TV(\mu,\mu^\circ)\le\varepsilon_i$. Because the audit lottery is independent of the scenario, the mixture label has conditional law $p_i\mu^\circ+(1-p_i)\mu$ in that same scenario. For every measurable $A$,
\[
\bigl|\bigl(p_i\mu^\circ+(1-p_i)\mu\bigr)(A)-\mu^\circ(A)\bigr|
=(1-p_i)\,\bigl|\mu(A)-\mu^\circ(A)\bigr|,
\]
so $\TV\bigl(p_i\mu^\circ+(1-p_i)\mu,\ \mu^\circ\bigr)=(1-p_i)\TV(\mu,\mu^\circ)\le(1-p_i)\varepsilon_i$, with equality when the underlying leakage is exact. The mixture is therefore $(1-p_i)\varepsilon_i$-leaky, and \cref{thm:leakage} gives $\Gamma_i\ge\Gamma_i^\circ-2L_i(1-p_i)\varepsilon_i$; the threshold $p_i^\ast$ follows by solving $\gamma_i>D_i+2L_i(1-p_i)\varepsilon_i$ for $p_i$ and truncating at zero. Expected settlement cost is $(1-p_i)k_i+p_ik_i^\circ=k_i+p_i(k_i^\circ-k_i)$, increasing in $p_i$ since $k_i^\circ>k_i$, hence minimized at the threshold.
\end{proof}

\begin{proof}[Proof of \cref{prop:dp} (differential privacy certifies leakage)]
Fix a conditioning scenario and write $\mu_{r_i}=\Law(Y_i\mid s,(r_i,r_{-i}),Z=z)$. Taking $\widetilde r_i=\bot_i$, the DP inequality gives, for every measurable $A$,
\[
\mu_{r_i}(A)-\mu_{\bot_i}(A)
\le \mu_{r_i}(A)-e^{-\delta_i}\mu_{r_i}(A)
=(1-e^{-\delta_i})\,\mu_{r_i}(A)\le 1-e^{-\delta_i},
\]
and symmetrically with the roles of $r_i$ and $\bot_i$ exchanged. Hence $\TV(\mu_{r_i},\mu_{\bot_i})\le 1-e^{-\delta_i}\le\delta_i$. Since the reference pipeline is by definition the operational pipeline run on $\bot_i$, this is exactly the bound of \cref{def:leaky}.

(a) For measurable $h$ and report-independent randomness $\omega$, events of the form $\{h(Y_i,\omega)\in A\}$ have probabilities that are averages over $\omega$ of probabilities of events $\{Y_i\in A_\omega\}$; each satisfies the likelihood-ratio bound, hence so does the average.

(b) We show the joint component vector satisfies the DP inequality with parameter $\sum_k\delta_{i,k}$; the claim then follows from the first display and part (a) applied to $h$. Work with densities with respect to a common dominating measure, conditioning throughout on the scenario. Let $f_{r}(c_{1:K})$ denote the joint density of $(C_1,\ldots,C_K)$ under report $r$. By the chain rule and the stagewise hypothesis,
\[
\frac{f_{r_i}(c_{1:K})}{f_{\widetilde r_i}(c_{1:K})}
=\prod_{k=1}^{K}
\frac{f_{r_i}(c_k\mid c_{1:k-1})}{f_{\widetilde r_i}(c_k\mid c_{1:k-1})}
\le \prod_{k=1}^{K} e^{\delta_{i,k}}
= e^{\sum_k\delta_{i,k}}
\]
for almost every $c_{1:K}$, where each factor is bounded because stage $k$ is $\delta_{i,k}$-DP for \emph{every} realization $c_{1:k-1}$ of the earlier components---this is what tolerates adaptivity. Integrating the density bound over any measurable set yields the DP inequality for the joint vector, completing the proof.
\end{proof}

\FloatBarrier
\printbibliography

\end{document}